\documentclass[11pt]{article}
\usepackage[margin=1.15in]{geometry}
\usepackage{amsmath,amssymb,amsthm,mathtools}
\usepackage{hyperref}
\usepackage{enumitem}
\usepackage{algorithm,caption}
\usepackage[many]{tcolorbox}
\usepackage{nicefrac}
\usepackage{authblk}
\usepackage{microtype}
\usetikzlibrary{graphs}

\newcommand{\X}{\mathcal{X}} 
\newcommand{\C}{\mathcal{C}} 
\newcommand{\KRM}{K_{\mathrm{RM}}} 

\newcommand{\LAS}{\mathrm{LAS}}
\newcommand{\OPT}{\mathrm{OPT}}
\newcommand{\STAB}{\mathrm{STAB}}
\newcommand{\QSTAB}{\mathrm{QSTAB}}

\newcommand{\proj}{\mathrm{proj}}

\theoremstyle{plain}
\newtheorem{theorem}{Theorem}
\newtheorem{lemma}{Lemma}
\newtheorem{proposition}{Proposition}
\newtheorem{corollary}{Corollary}
\theoremstyle{definition}
\newtheorem{definition}{Definition}
\newtheorem{remark}{Remark}
\newtheorem{observation}{Observation}

\title{\textbf{Grouped Color Deletion, Lasserre Exactness and Clique-Sum Locality for Rainbow Matching}}

\author{Georgios Stamoulis\thanks{georgios.stamoulis@maastrichtuniversity.nl}}
\affil{Department of Advanced Computing Sciences \\ Maastricht University, The Netherlands}

\date{}
\begin{document}
\date{}

\maketitle

\begin{abstract}
In this paper we study the \emph{rainbow matching} problem: given an edge-colored graph, find a maximum matching that uses at most one edge of each color class. Rainbow matchings correspond to stable sets in the \textit{augmented graph} $H$ obtained from the line graph by completing each color class into a clique. For a hereditary graph class $\X$, we introduce and study the parameter $\kappa_\X$ to be the minimum number of color classes whose deletion places the \textit{residual} augmented graph in $\X$.

We show that this parameter has two complementary  flavors. Firstly, from a polyhedral point of view, we show that if $\X$ is \emph{uniformly rank-$r$ exact} (that is, every graph in $\X$ has a stable set polytope description by rank inequalities of right hand side $\leq r$), then deleting $k$ color classes to obtain a residual augmented graph that belongs to $\X$ implies \textit{exactness} of the Lasserre relaxation at level $k+r$. This yields, in particular, exactness at level $k+1$ for deletion to \emph{perfect} residual graphs, and exactness at level $k+r$ for deletion to $h$-\textit{perfect} residual graphs of bounded odd-hole rank $r$.

Our second result is structural. We show that the right object in this case is the \emph{color-intersection graph} $\Gamma$ where vertices are color classes, and we connect two vertices if the corresponding colors are incident on a common vertex. We show that $\Gamma$ impacts the topology of the augmented/conflict graph $H$ as follows: articulation colors in $\Gamma$ induce clique-sum decompositions in $H$, so residual obstructions for \textit{clique-sum-local} hereditary classes $\X$ (such as chordal, perfect, bipartite) are embedded in individual blocks. This shows that we can test membership of the residual graph in these target classes in a blockwise manner. As a consequence, for any such target class whose recognition can be done in polynomial time, we give an exact dynamic programming algorithm for computing the deletion parameter when the color-intersection graph has blocks of bounded size.

Finally we show that, in the chordal case, given a chordalizing color set the rainbow matching problem can be solved exactly by branching over the deleted color classes and solving chordal residual instances. We also show that computing the deletion parameter is \textbf{NP}-hard already in the chordal target case but it becomes FPT for classes $\X$ characterized by a set of forbidden induced subgraphs of bounded size.
\end{abstract}

\newpage
\section{Introduction}
Matching problems with side constraints are a very natural generalization of the matching problem and they have been studied extensively from combinatorial, complexity and polyhedral point of view. A particularly natural example is the \emph{rainbow matching} \cite{DBLP:journals/jct/Drisko98, DBLP:journals/ejc/AharoniBCHS19}: we are given an edge-colored graph and we are asked if there exists a matching of size at least $m$ such that all colors that appear are distinct. We may also impose upper bounds on how many edges per color we may use, leading to bounded-color matchings and other related grouped packing problems \cite{DBLP:journals/tcs/MastrolilliS14}. 


Rainbow Matching is \textbf{NP}-hard \cite{DBLP:journals/jacm/ItaiRT78} and \textbf{APX}-hard already on properly colored paths \cite{pfender2014complexity}. Constant-factor approximation algorithms exist based on combinatorial \cite{pfender2014complexity} or Linear Programming \cite{DBLP:conf/mfcs/Stamoulis14} techniques. The latter relaxations perform worse than the former local search arguments as they suffer from relatively large \emph{integrality gaps} even on very simple instances such as a bichromatic $C_4$ with alternating colors, and it is not clear when, and how, such gaps may be reduced, something that would imply better approximability properties \cite{DBLP:journals/disopt/KelkS19}.

In this paper we continue the study of the Rainbow Matching problem but we take a more structural viewpoint. Instead of asking how badly a particular relaxation can fail in the worst case (a notion captured by large integrality gaps), we are asking \textit{what graph-theoretic parameters influence the exactness rank of lift-and-project hierarchies?} Our starting point is the standard conflict graph formulation of the rainbow matching where edges of the original graph become vertices, and two such vertices are adjacent if their corresponding edges cannot be chosen together, either due to vertex or to color constraints. Rainbow matching is therefore exactly a stable set problem on a graph obtained from a line graph by completing each color class into a clique. We call this graph the \emph{augmented graph} and we denote it by $H$.

\subsection{Our contributions}
We introduce a \textit{grouped deletion} parameter for Rainbow Matching. Given an edge-colored graph and its conflict/augmented graph $H$, for every hereditary graph class $\mathcal{X}$ we define $\kappa_{\mathcal{X}}$ as the minimum number of color classes whose deletion places the corresponding \textit{residual} augmented graph (after the deletion of the vertices corresponding to these colors) in class $\mathcal{X}$. We show that this parameter has both polyhedral as well as structural and algorithmic interpretations.

Our first  result is a polyhedral \textit{rank} theorem and is related primarily to the geometry of its linear/SDP relaxation: it shows that the Lasserre relaxation becomes exact (integral) when a grouped deletion set exists. On a high level, the underlying mechanism is the following: suppose that after deleting $k$ color classes the residual augmented graph belongs to a hereditary class $\X$ whose stable set polytope is described by valid rank inequalities of right-hand side at most $r$. Then, a point in level $(k+r)$ of the Lasserre hierarchy contains enough \textit{moment} information to be decomposed along the deleted coordinates while still retaining an $r$-round feasible Lasserre object on each conditioned branch allowing us to combine Lasserre decomposition results with bounded-rank residual exactness, giving a general residual exact-class theorem. The strongest rank-$1$ instances come from perfect graphs. We also show how to handle $h$-perfect graphs of \textit{bounded odd-hole rank} and we also comment on the behavior of the integrality gap when we do not have the full budget of $k+r$ rounds that guarantees integrality. 

Our second  result is structural with  algorithmic consequences: we study the structure of the parameter $\kappa_\X$ in hereditary target classes $\X$ that are closed under \textit{clique-sums}: This class includes all classes the minimal forbidden induced subgraphs of which are
connected and have no clique cutset. 
A clique-sum decomposes a graph by ``gluing'' subgraphs along a common clique, with no edges between the non-shared parts. Chordal graphs, in particular, constitute a very intuitive such class and we use this case, for expositional purposes, to present the dynamic-programming recurrence concretely.

The algorithmic side requires knowing when membership, in a particular target class, after deletion can be checked locally. This is not visible in the conflict/augmented graph. We prove the following chain of results: first we form the \textit{color intersection} graph $\Gamma$ where color classes are vertices in $\Gamma$, and two vertices are connected if there exists a vertex in the original graph that is incident to edges of both colors.  We show that \textit{articulation colors} in $\Gamma$ induce exact clique cutsets in the augmented graph H, leading us to a clique-sum decomposition and to a structural result that says that induced residual obstructions (such as induced residual holes in the chordal case) are visible only within individual and distinct blocks. 
These results lead us to an exact dynamic programming algorithm for computing the parameter when $\Gamma$ has bounded block size.  We also show that given such a set of color classes $F$ the deletion of which places the residual augmented graph in some hereditary target class $\X$ with polynomial time recognition, we can solve maximum rainbow matching exactly by branching over the chosen edge in each deleted color class and solving the compatible residual instance. Together, these  imply fixed-parameter tractability for Rainbow Matching in these target classes  when the color-intersection graph has bounded block size and the relevant color classes are bounded.


Finally, we show that computing the chordal deletion number is \textbf{NP}-hard already in the chordal case, while for hereditary target classes characterized by finitely many forbidden induced subgraphs of bounded size, the grouped deletion problem is FPT.

Thus, the contribution, and the conceptual message, of our manuscript is  that the same grouped structural deletion parameter turns out to have two distinct but compatible flavors: \textit{(i)} In the residual augmented graph it defines a small vertex set $F$ on which Lasserre can be conditioned. \textit{(ii)} in the color-intersection graph, the exact same colors form the link where the instance decomposes into blocks allowing for a direct DP. In the clique-sum closed targets and the bounded block size of the color-intersection graph case, the same deleted set $F$ bridges the geometric and combinatorial contributions by extracting the optimal Rainbow Matching and also certifying that Lasserre level $|F|+1$ is exact.  Thus, the same deletion set $F$ controls at the same time, Lasserre rank, block locality, and exact optimization.


\begin{table}
    \centering
    \begin{tabular}{c|c|c}
         Result & Assumption  & Consequence \\
         \hline 
         \hline
         Lasserre Exactness  & $H - S(F) \in \X, \X ~r\mbox{-rank exact} $ & level-($|F|+r$) exact  \\
         Clique Sum Locality & $\X$ is clique sum local & blockwise membership \\
         DP & size of blocks($\Gamma$) $\leq b$ & compute  $\kappa$ in $2^{\mathcal{O}(b)} \cdot \mathrm{poly}$ \\
         Branching Algorithm & chordalizing $F$, bounded $|\C_i|$ & compute RM is FPT \\ 
         Complexity  & Chordal target $H$ & \textbf{NP}-hard \\
         Complexity & finite, bounded induced obstructions & FPT \\
    \end{tabular}
    \caption{Summary of our results}
    \label{tab:placeholder}
\end{table}

\begin{remark}
A set of colors $F$ whose deletion places the residual augmented graph in some easy (from stable set point of view) class $\X$ cannot be ignored because the optimal solution for the Rainbow Matching can \emph{still} use colors from $F$.  $F$ can be thought of as a boundary layer: once we have identified $F$  we should still account for the choices made inside the deleted color classes $F$ and solve the compatible residual instance, see Theorem \ref{supplied-chordal-set}. On the polyhedral side, the mere existence of such set already implies exactness of a low level of the Lasserre relaxation for the whole instance. 
\end{remark}


\subsection{Related work}
Deletion parameters to reach a target graph class $\X$ have been extensively studied in the graph-modification and parameterized complexity literature. In classical deletion to $\X$ settings  we may delete vertices (or edges) individually to reach a (hereditary) graph class $\X$. Cai’s framework \cite{DBLP:journals/ipl/Cai96} gives the canonical FPT result for hereditary properties characterized by finitely many forbidden induced subgraphs. Later work studies specific target classes such as chordal \cite{DBLP:journals/algorithmica/Marx10} or, more generally, perfect graphs \cite{DBLP:journals/tcs/HeggernesHJKV13}. However, in our case, we delete \textit{entire color classes} of the primal graph that correspond to a prescribed group of vertices in the conflict graph, at unit cost.

Thus  $\kappa_\X$ can be best viewed as a \textit{group deletion/modulator to a hereditary class} $\X$, see for example the $\mathcal G$-modular cardinality/restricted modular partition framework of Lafond and Luo \cite{DBLP:conf/mfcs/LafondL23}, where the complexity is controlled by decomposing the graph into a bounded number of structured parts that are manipulated as units.  A second  analogy comes from \emph{grouped or color-based deletion} problems on edge-colored graphs see for example \cite{DBLP:journals/mst/MorawietzGKS22}. 
The idea is precisely to choose a small set of colors and delete \emph{all} associated edges. 

Rainbow matching has also been studied extensively from a parameterized point of view, but the parameter of interest there was the  natural solution size $m$. For this case, FPT results and a cubic kernel was given in \cite{gupta2019parameterized} which was soon after improved to a quadratic kernel in \cite{DBLP:journals/algorithmica/GuptaRSZ20}. Moreover, \cite{DBLP:conf/swat/FominGK024} provides a point of view similar in spirit but complementary to the one in this manuscript: there the authors studied independent stable set in matroid frameworks, where partition matroids capture naturally rainbow constraints and rainbow matching appears as a line-graph special case. They provided algorithms on graph classes such as chordal graphs under linear matroids under representations.  Finally, \cite{DBLP:conf/soda/BessyBTW23} use Rainbow Matching to design kernels for various packing problems. 

Finally, a large body of work concerns exactness of various families of linear or semidefinite relaxations \cite{laurent2003comparison}. For example, and again without intention to exhaust the  literature,  Lov\'asz theta function is a powerful SDP relaxation of the maximum stable set problem, and it defines the ``theta body'',  a convex set bounding the stable set polytope  which is tight for perfect graphs \cite{thetabody, DBLP:journals/mp/BianchiENT17, chvatal1975certain, lovasz1991cones}. On the other hand, bounded-width structure can force hierarchy ``collapse'' to the integral hull: this appears, for example, in Sherali–Adams \cite{DBLP:journals/disopt/BienstockO04, DBLP:journals/siamcomp/CarbonnelRZ22} and Lasserre exactness results \cite{wainwright2004treewidth} parameterized by the treewidth. 
Our result is orthogonal: we do not insist on bounding the width of the entire conflict graph but we show that we can still obtain exactness from a small grouped ``modulator'' to a hereditary residual class whose stable-set polytope admits a bounded-rank description.

\subsection{Graph Theoretic Preliminaries}
All graphs are finite and simple. For a graph $G$ and a subset $U$ of its  vertices, $G[U]$ is the subgraph induced by $U$. A graph is \textit{chordal} if it has no induced cycle of length at least four. A \textit{perfect elimination ordering} (PEO) of a chordal graph is an ordering of the vertices in which every vertex is simplicial in the graph induced by itself and the later vertices. A graph $X$ is perfect if every induced subgraph $Y$ satisfies $\chi(Y) = \omega(Y)$, where $\chi(Y)$ is the chromatic number of $Y$ and $\omega$ is its maximum clique.  It is $h$-perfect if its stable set polytope is described by nonnegativity, clique inequalities, and odd-hole inequalities \cite{hperfect}. Perfect graphs are also $h$-perfect. The stable set polytope of a graph $X$ is denoted by $\STAB(X)$ and is the convex hull of all incidence vectors of stable sets. The \emph{fractional stable set relaxation} of $X$ is
\[
\mathrm{FRAC}(X)=\Big\{x\in [0,1]^{V(X)}: ~~x_u+x_v\leq 1 \text{ for every } \{u,v\}\in E(X)\Big\},
\]
and the \emph{clique relaxation} is
\[
\QSTAB(X)=\Big\{x\in [0,1]^{V(X)} : ~~x(Q) = \sum_{u \in Q} x_u \leq 1 \text{ for each clique } Q \mbox { of } X \Big\}.
\]

Let $(G,\C)$ be an edge-colored graph and for an edge $e$ we denote its unique color by $c(e)\in\C$. Its \textit{conflict graph} $H$ has one vertex for each edge of $G$, and two vertices are adjacent if the corresponding edges share an endpoint or have the same color. Thus rainbow matchings in $(G,\C)$ are exactly maximum stable sets in $H$. The \textit{line graph} $L(G)$ of $G$ is the graph whose vertices are the edges of  $G$ with adjacency defined by shared endpoints. We obtain the \emph{augmented} graph  by completing each color class into a clique. The \textit{color-intersection graph} $\Gamma$  has one vertex for each color, with two colors adjacent if some edge of one color shares an endpoint with some edge of the other. For a set $F$ of colors $S(F) \subseteq V(H)$ denotes the set of conflict-graph vertices corresponding to edges whose colors lie in $F$. A \textit{block} (biconnected component) of a graph is a maximal 2-connected subgraph, together with isolated vertices and bridge edges in the standard block decomposition. An articulation vertex is a vertex whose removal increases the number of connected components. A \emph{clique cutset} is a clique the removal of which increases the number of connected components. A \emph{clique sum} of two graphs is obtained by identifying in both graphs a clique of teh same size and ``gluing'' the two graphs together along that clique.  The Rainbow Matching (RM) problem can be formally stated as follows:
\begin{definition}[Rainbow Matching]
Given a simple, non-directed graph $G=(V,E)$ where each edge $e \in E$ has a color $c(e) \in \C$, $\C = \{\C_1, \dots \C_k\}$ is a partition of the edges into $k$ color classes,  find the maximum matching $\mathcal{M}$ in $G$ with $|\mathcal{M} \cap C_i | \leq 1$  $\forall i \in [k]$.
\end{definition} 

We can formulate RM as a linear program in the straightforward way: associate a variable $x_e$ for each $e \in E$, and maximize $\sum_{e} x_e$ subject to standard matching and color budgets constraints
\[
\sum_{e: v \in e} x_e \leq 1, \quad \forall v \in V(G), \quad \mbox{ and } \sum_{e \in \C_i} x_e \leq 1, \quad \forall \C_i \in \C.
\]
Dropping the integrality constraints $x_e \in \{0,1\}$ gives us the linear relaxation of the problem which has integrality gap equal to 2 even for very restrictive graphs. We denote the linear relaxation for RM above as $K_{\mathrm{RM}}$.  Every degree or color constraint of the rainbow-matching formulation induces a valid conflict inequality in its conflict graph $H$, so every integer feasible RM solution corresponds to a stable set in $H$.

\subsection{Tools from the Lasserre Hierarchy}
The Lasserre hierarchy has been introduced by Jean Lasserre in \cite{DBLP:conf/ipco/Lasserre01, DBLP:journals/siamjo/Lasserre01}, and since then has found numerous applications in combinatorial optimization. Our notation follows closely the lecture notes \cite{rothvoss2013lasserre}, see also \cite{laurent2003comparison}. 

Let $K= \{x \in \mathbb{R}^n \mid A x \geq b\}$ be a polyhedron, where $A \in \mathbb{R}^{m \times n}$ and $b \in \mathbb{R}^m$. (We assume the bounding constraints $0 \leq x_i \leq 1$ are included in this system). To capture joint probabilities, Lasserre considers vectors $y \in \mathbb{R}^{2^{[n]}}$ indexed by subsets of variables. For an integer $t \geq 1$, we define two types of moment matrices for $y$:

\begin{description}
\item[The Moment Matrix]: $M_t(y)$ is a symmetric matrix whose rows and columns are indexed by sets $I, J \subseteq [n]$ with $|I|, |J| \leq t$. Its entries are defined by $(M_t(y))_{I,J} := y_{I \cup J}$.

\item[The Slack Matrices]: For each $\ell \in [m]$ corresponding to the linear constraint $\sum_{i=1}^n A_{\ell i} x_i \geq b_\ell$, the slack moment matrix $M_t^\ell(y)$ is indexed by $I, J \subseteq [n]$ with $|I|, |J| \leq t$. Its entries are defined by $(M_t^\ell(y))_{I,J} = \sum_{i=1}^n A_{\ell i} y_{I \cup J \cup \{i\}} - b_\ell y_{I \cup J}$.
\end{description}

\begin{definition}
The $t$-th level of the Lasserre hierarchy, denoted $\LAS_t(K)$, is the set of vectors $y \in \mathbb{R}^{2^{[n]}}$ satisfying: 
\[
M_t(y) \succeq 0, \quad M_t^\ell(y) \succeq 0 \quad \forall \ell \in [m], \quad y_\emptyset = 1.
\]
\end{definition}
For rainbow matching we will use $K = \KRM$ and for the rank-closure lemma we use $K=\mathrm{FRAC}(H)$, $H$ is the conflict graph of the input graph $G$. 
The primary geometric object of interest is the projection of $\LAS_t(K)$ back onto the original variable space:
\[
\LAS_t^{\proj}(K) = \Big\{(y_{\{1\}}, \dots, y_{\{n\}}) \mid y \in \LAS_t(K) \Big\} 
\]

We will crucially utilize the following \textit{Decomposition Lemma} \cite{karlin2011integrality, rothvoss2013lasserre} which intuitively says that if every valid integral solution can choose at most $k$ elements from a given subset $S$, then any Lasserre fractional point at level $t \geq k$ can be perfectly decomposed into a convex combination of points that are strictly integral on $S$:

\begin{lemma}\label{decompositionlemma}
Let $0 \leq k \leq t$ and $y \in \LAS_t(K)$. Let $S \subseteq [n]$ be a subset of indices such that every integral solution $x \in K \cap \{0,1\}^n$ satisfies $|supp(x) \cap S| \leq k$. Then, $y$ can be expressed as a convex combination of feasible Lasserre solutions at level $t-k$ that are perfectly integral on $S$. Formally:
\[
y \in \mathrm{conv}\Big\{z \mid z \in \LAS_{t-k}(K), ~z_i \in \{0,1\} ~ \forall i \in S\Big\}.
\]
\end{lemma}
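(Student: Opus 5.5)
The plan is the standard ``Möbius inversion over $S$'' argument of Karlin--Mathieu--Nguyen rather than iterated conditioning. Iterated conditioning on single coordinates of $S$ costs one level per step. Worse, conditioning on $x_i=0$ does not reduce the budget $k$, so naive branching may lose more than $k$ levels. Instead I would decompose $y$ in one shot into $y=\sum_{I\subseteq S,\,|I|\le k}\lambda_I z^I$. Each $z^I$ is meant to be the ``conditional moment vector'' on the event that the support on $S$ is exactly $I$.

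\emph{Step 1 (vanishing of large moments).} Show that $y_A=0$ for every $A\subseteq S$ with $|A|\ge k+1$ and $|A|\le 2t$. I would argue through local consistency of Lasserre points. For $|A|$ within the allowed range, the principal submatrices of $M_t(y)$ and of the slack matrices $M^\ell_t(y)$ restricted to subsets of $A$ induce a genuine probability distribution over assignments of the coordinates in $A$. Only assignments extendable to points of $K$ get positive mass, and by hypothesis these have at most $k$ ones on $S$. Hence the all-ones event on $A$ has probability $0$, so $y_A=0$. By PSD-ness, $y_{A\cup J}=0$ as well for all admissible $J$: it is an off-diagonal entry sharing a row with the zero diagonal entry $y_A$. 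I expect this step to be the main obstacle. With $t=k$ the set $A$ has size $k+1>t$, so one must use the $2t$-range of the moment matrix together with the slack matrices, and the details depend on how $K$ encodes the bound on $S$. If necessary I would state it as the standing assumption of \cite{karlin2011integrality}, namely that the constraint on $S$ is enforced within the hierarchy.

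\emph{Step 2 (definition of the pieces).} For $I\subseteq S$ with $|I|\le k$ and $|J|\le 2(t-k)$, define
\[
z^I_J=\sum_{H\subseteq S\setminus I}(-1)^{|H|}\,y_{I\cup H\cup J}.
\]
By Step~1 only terms with $|I\cup H|\le k$ survive, so the sum is finite and involves only moments of sets of size at most $2t$. Set $\lambda_I=z^I_\emptyset$. Möbius inversion on the Boolean lattice of $S$ gives $\sum_I z^I_J=y_J$, in particular $\sum_I\lambda_I=y_\emptyset=1$. Integrality on $S$ follows directly. For $i\in I$ we get $z^I_{\{i\}}=z^I_\emptyset$. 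For $i\in S\setminus I$, the terms pair up under $H\leftrightarrow H\triangle\{i\}$ with opposite signs, so $z^I_{\{i\}}=0$. After normalizing, $z^I/\lambda_I$ is therefore $0/1$ on $S$.

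\emph{Step 3 (feasibility at level $t-k$).} Write $y_{A\cup B}=\langle u_A,u_B\rangle$ via a Gram factorization of $M_t(y)$. Define $w_J=\sum_{H\subseteq S\setminus I}(-1)^{|H|}u_{I\cup H\cup J}$ for $|J|\le t-k$; every index set has size at most $t$. The identity $\langle u_A,u_B\rangle=y_{A\cup B}$ and the idempotence of the sets $I\cup H$ make the double sum collapse, giving $\langle w_J,w_{J'}\rangle=z^I_{J\cup J'}$. Hence $M_{t-k}(z^I)\succeq0$, and in particular $\lambda_I\ge0$. The identical computation with the Gram factorization of each $M^\ell_t(y)$ gives $M^\ell_{t-k}(z^I)\succeq 0$. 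Finally, discard the indices with $\lambda_I=0$ and set $\tilde z^I=z^I/\lambda_I$. Then $\tilde z^I_\emptyset=1$, $\tilde z^I\in\LAS_{t-k}(K)$, $\tilde z^I$ is integral on $S$, and $y=\sum_I\lambda_I\tilde z^I$ is the required convex combination.
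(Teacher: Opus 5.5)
The paper gives no proof of this lemma; it cites Karlin--Mathieu--Nguyen and Rothvo\ss{}, whose argument is your Steps~2--3, the truncated M\"obius construction with Gram vectors. Those steps go through once Step~1 is available, provided the sums defining $z^I$ and $w_J$ are explicitly truncated to $|I\cup H|\le k$. The truncation is needed for two reasons. The Gram vectors $u_A$ exist only for $|A|\le t$. Also, collapsing the double sum produces sets $I\cup H\cup H'$ of size up to $2k$, whose moments and slack entries (sets of size up to $2t+1$) must vanish.

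\textbf{The gap is in Step~1, and as stated it cannot be closed.}
\begin{itemize}
\item Your local-consistency argument only works for $|A|\le t$. Only then is $y_A$ a diagonal entry of $M_t(y)$, so that $y_A=0$ kills its row. Only then does conditioning on $A$ give a point of $\LAS_{t-|A|}(K)$ whose projection lies in $K$.
\item Even in that range, the conditioned point equals $1$ on $A$ but may be fractional elsewhere. The contradiction therefore needs $|\{i\in S: x_i=1\}|\le k$ for \emph{every} $x\in K$, whereas the lemma assumes this only for $x\in K\cap\{0,1\}^n$.
\item When $t=k$, so $|A|=k+1>t$, neither the $2t$-range nor the slack matrices can force $y_A=0$, because the lemma itself is false there.
\end{itemize}

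\textbf{Counterexample at $t=k$.} Let $K=\{x\in[0,1]^2: x_1+x_2\le\tfrac32\}$, $S=\{1,2\}$ and $k=t=1$. The integral points of $K$ are $(0,0),(1,0),(0,1)$, so the hypothesis holds. Let $y$ be the moment vector of the distribution with $\Pr[(1,1)]=0.08$, $\Pr[(1,0)]=\Pr[(0,1)]=0.43$ and $\Pr[(0,0)]=0.06$. That is, $y_\emptyset=1$, $y_{\{1\}}=y_{\{2\}}=0.51$ and $y_{\{1,2\}}=0.08$.

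$M_1(y)$ and the slack matrices of the box constraints are nonnegative combinations of rank-one matrices, hence PSD. The slack matrix of $\tfrac32-x_1-x_2\ge 0$, with rows $\emptyset,\{1\},\{2\}$, is
\[
\begin{pmatrix} 0.48 & 0.175 & 0.175\\ 0.175 & 0.175 & -0.04\\ 0.175 & -0.04 & 0.175 \end{pmatrix}.
\]
Its leading principal minors are $0.48$, $0.053375$ and $0.00076325$, all positive. Hence $y\in\LAS_1(K)$; it also satisfies $M_2(y)\succeq 0$, so shifting the indexing convention does not help.

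Yet $y_{\{1\}}+y_{\{2\}}=1.02$. Every $z\in\LAS_0(K)$ that is integral on $S$ has $(z_{\{1\}},z_{\{2\}})\in K\cap\{0,1\}^2$. So every convex combination of such $z$ satisfies $x_1+x_2\le 1$, and $y$ is not one of them.

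\textbf{What your Steps 2--3 do prove.} They prove the version with $k<t$ and the bound assumed on all of $K$. Take $A\subseteq S$ with $|A|=k+1\le t$. If $y_A>0$, conditioning on $A$ gives a point of $\LAS_{t-k-1}(K)\subseteq\LAS_0(K)$ whose projection lies in $K$ and equals $1$ on $A$, a contradiction. Zero-diagonal propagation in $M_t(y)$ and in the slack matrices of $x_i\ge 0$ then kills $y_L$ for every $L$ with $|L|\le 2t+1$ containing a $(k+1)$-subset of $S$. That is exactly the vanishing your truncation needs.

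Falling back on a ``standing assumption'' changes the lemma rather than proving it. This also matters downstream: Lemma~\ref{rank-closurelemma} uses the case $t=k$. For $K=\KRM$ that case must instead be justified from the degree and color constraints, whose slack matrices already at level $1$ force $y_A=0$ whenever $A$ contains an edge of $H$.
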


With the Decomposition Lemma in our hands, we will extract its most crucial polyhedral consequence for the stable set problem (on the conflict graph $H$). In the canonical fractional stable set relaxation the total fractional mass distributed across an induced subgraph $H[U]$ can exceed its true combinatorial independence number, $\alpha(H[U])$. However, if we would have the budget to condition on the vertices of $U$ up to this bound then the Lasserre hierarchy would systematically eliminate this fractional ``cheating''. This leads to our primary polyhedral tool:  the \textit{\textbf{rank-closure lemma}}, which establishes that exactly $r$ rounds of Lasserre are sufficient to globally enforce the rank inequality of any induced subgraph with independence number $r$. For RM we will work with $\KRM$ but we make the next rank closure lemma independent of the formulation. The following results are folklore. 

\begin{lemma}
\label{rank-closurelemma}
Let $H$ be a graph, let $K \subseteq[0,1]^{V(H)}$ be a polyhedron such that $K \subseteq \mathrm{FRAC}(H)$. 
Set $r =\alpha(H[U])$ for $U \subseteq V(H)$. If $y \in \LAS_r(K)$ and $x_v =y_{\{v\}}$ denotes its singleton projection, then $x(U) = \sum_{v \in U} x_v \leq r$.
\end{lemma}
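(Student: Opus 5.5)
The plan is to apply the Decomposition Lemma (Lemma~\ref{decompositionlemma}) with $S=U$ and $k=t=r$. Since $K\subseteq\mathrm{FRAC}(H)$, every integral point $x\in K\cap\{0,1\}^{V(H)}$ satisfies $x_u+x_v\le 1$ on every edge of $H$. Hence $\mathrm{supp}(x)$ is a stable set of $H$, and $\mathrm{supp}(x)\cap U$ is a stable set of $H[U]$. This gives $|\mathrm{supp}(x)\cap U|\le \alpha(H[U])=r$, so the hypothesis of Lemma~\ref{decompositionlemma} holds. We conclude that
\[
y\in \mathrm{conv}\Big\{z \mid z\in \LAS_0(K),\ z_i\in\{0,1\}\ \forall i\in U\Big\}.
\]

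Next we show that each such $z$ satisfies $\sum_{v\in U} z_{\{v\}}\le r$. Since $\LAS_0(K)$ is the degenerate level, we should avoid leaning on any semidefiniteness, and use only that the level-$0$ slack matrices are nonnegative scalars. In particular, $z\in\LAS_0(K)$ implies that the projection $(z_{\{v\}})_v$ lies in $K\subseteq\mathrm{FRAC}(H)$. Let $T=\{v\in U: z_{\{v\}}=1\}$. If two vertices of $T$ were adjacent in $H$, the edge inequality $x_u+x_v\le 1$ would be violated by the projection of $z$. So $T$ is stable in $H[U]$, and therefore
\[
\sum_{v\in U} z_{\{v\}}=|T|\le r .
\]

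The map $y\mapsto \sum_{v\in U} y_{\{v\}}$ is linear, so the bound passes to the convex combination and gives $x(U)\le r$.

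The main obstacle is the level-$0$ bookkeeping. We must check that the convention for $\LAS_0$ (the $1\times 1$ slack matrices indexed by $\emptyset$) really forces the singleton projection to satisfy the constraints of $K$. If the paper's version of the Decomposition Lemma only yields level $t-k\ge 0$ objects with weaker guarantees, we take a fallback route. We work at level $r$ and condition on a single vertex $v$ with $0<y_{\{v\}}<1$. The standard conditioning $y\mapsto y_{I\cup\{v\}}/y_{\{v\}}$ and $y\mapsto (y_I-y_{I\cup\{v\}})/(1-y_{\{v\}})$ keeps at least level $r-1$. We then argue by induction on $r$. In the branch where $v$ is set to $1$, the neighbours of $v$ in $U$ are forced to $0$, because the conditioned point still satisfies the edge constraints and has $x_v=1$. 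That leaves $U\setminus N[v]$, whose independence number is at most $r-1$. The base case $r=1$ (or $r=0$) follows directly from the edge inequalities and positive semidefiniteness of $M_1$.
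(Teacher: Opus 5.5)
Your main argument is the paper's proof. You apply Lemma~\ref{decompositionlemma} with $S=U$ and $k=t=r$, and note that each level-$0$ piece projects into $K\subseteq\mathrm{FRAC}(H)$; the $1\times 1$ slack matrices give exactly $Ax\ge b$, which is what the paper abbreviates as $\LAS_0(K)=K$. Hence the ones of each piece on $U$ form a stable set of $H[U]$ of size at most $r$, and you average over the convex combination.

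The fallback conditioning route is therefore unnecessary, which is fortunate, because as sketched it does not close. In the branch $v\mapsto 0$ the level drops to $r-1$ while $\alpha(H[U\setminus\{v\}])$ can still equal $r$, so your induction on $r$ gives no bound there.
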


\begin{proof}
We apply Lemma~\ref{decompositionlemma} with the set $S =U$, with the same integer $k =r$, and with level $t =r$. Since every integral stable set of $H$ uses at most $\alpha(H[U])=r$ vertices from $U$,
the hypotheses of Lemma~\ref{decompositionlemma} are satisfied. Hence 
\[
y=\sum_{\zeta}\lambda_\zeta\, y^{(\zeta)},
\]
where each $\lambda_\zeta \geq 0$, $\sum_\zeta \lambda_\zeta=1$, each
$y^{(\zeta)}\in LAS_0(K)=K$, and each $y^{(\zeta)}$ is integral on $U$.

Let $x^{(\zeta)}$ denote the singleton projection of $y^{(\zeta)}$. 
Since $y^{(\zeta)} \in K \subseteq \mathrm{FRAC}(H)$ the singleton projections $x^{(\zeta)}$ satisfy $x_u^{(\zeta)} + x_v^{(\zeta)} \leq 1$ for every edge $\{u,v\} \in E(H)$.  Since $x^{(\zeta)}$ is integral on $U$, the set $I_\zeta = \Big\{u\in U : x^{(\zeta)}_u = 1\Big\} $ is a stable set of the induced subgraph $H[U]$. Therefore $|I_\zeta|\leq \alpha(H[U])=r$. Equivalently, $ x^{(\zeta)}(U) \leq r$ for every $\zeta$. Now average over the convex decomposition:
\[
x(U) =  \sum_{u\in U} x_u =
\sum_{u\in U}\sum_\zeta \lambda_\zeta x^{(\zeta)}_u
=
\sum_\zeta \lambda_\zeta x^{(\zeta)}(U)
\leq \sum_\zeta \lambda_\zeta r = r.
\]
This proves the claim.
\end{proof}

\begin{corollary}
\label{clique-closure}
Let $Q$ be a clique in the conflict graph $H$. Then every singleton projection $x$ of a point $y\in \LAS_1(\KRM)$ satisfies $x(Q)\leq 1$.
\end{corollary}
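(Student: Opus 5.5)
The corollary should follow directly from the rank-closure lemma (Lemma~\ref{rank-closurelemma}) with $U = Q$ and $K = \KRM$. Since $Q$ is a clique in $H$, $H[Q]$ is complete, so $\alpha(H[Q]) = 1$ (assuming $Q \neq \emptyset$; for $Q=\emptyset$ the claim $x(Q)=0\le 1$ is trivial). So $r=1$, and the lemma applied to $y \in \LAS_1(\KRM)$ gives $x(Q) \le 1$. The only real work is to verify the lemma's hypothesis $\KRM \subseteq \mathrm{FRAC}(H)$ on the variable space $\mathbb{R}^{E(G)} = \mathbb{R}^{V(H)}$.

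For this I would take $x \in \KRM$ and an edge $\{e,f\}$ of $H$. By the definition of the conflict/augmented graph, either $e$ and $f$ share an endpoint $v$ of $G$, or $c(e)=c(f)=\C_i$ for some color class. In the first case the degree constraint at $v$ gives
\[
x_e + x_f \le \sum_{g:\, v\in g} x_g \le 1 .
\]
In the second case the color budget gives
\[
x_e+x_f \le \sum_{g\in \C_i} x_g \le 1 .
\]
Both estimates drop only nonnegative terms, which is allowed because the bounds $0\le x\le 1$ are part of the system defining $\KRM$. The box constraints also give $\KRM \subseteq [0,1]^{V(H)}$. Hence $\KRM \subseteq \mathrm{FRAC}(H)$. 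The same inequalities show that integral points of $\KRM$ are stable sets of $H$, which is what Lemma~\ref{decompositionlemma} (inside the proof of the rank-closure lemma) implicitly uses.

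Applying Lemma~\ref{rank-closurelemma} with $U=Q$ and $r=1$ then yields $x(Q)\le 1$ for the singleton projection of any $y\in\LAS_1(\KRM)$. There is no real obstacle here. The one point to be careful about is that a clique $Q$ of $H$ can mix line-graph adjacencies and color adjacencies, so $x(Q)\le 1$ is generally \emph{not} a single constraint of $\KRM$; for example, a triangle formed by two edges of the same color and a third edge sharing endpoints with both. This is exactly why one round of Lasserre is needed rather than level $0$, and why we go through the decomposition argument instead of reading the inequality off $\KRM$ directly.
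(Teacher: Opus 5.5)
Your proposal is correct and follows the paper's proof: apply Lemma~\ref{rank-closurelemma} with $U=Q$, using $\alpha(H[Q])=1$. Your explicit check that $\KRM\subseteq \mathrm{FRAC}(H)$, via the degree and color constraints together with nonnegativity, supplies a hypothesis the paper leaves implicit, and your triangle example correctly explains why level $1$ rather than level $0$ is needed.
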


\begin{proof}
Because $Q$ is a clique of the conflict graph, every two distinct vertices of $Q$ are conflicting. Hence every integral stable set of $H$ uses at most one vertex of $Q$. Equivalently, $\alpha(H[Q])=1$. Applying Lemma~\ref{rank-closurelemma} with $U = Q$, we obtain $x(Q)\le \alpha(H[Q])=1$.
\end{proof}

\section{Residual Exact Classes for Rainbow Matching}
The previous discussion showed us that level-$1$ Lasserre conditions enforce clique inequalities, and higher levels allow us to  condition on bounded subsets of variables. We now lift these algebraic tools into a more global structural framework by asking \textit{if deleting a small set of color classes from a rainbow matching instance leaves the residual augmented graph with a simpler polyhedral description, how many rounds of the hierarchy are required in order to capture the exact integral hull?}  The following ``meta-theorem'' demonstrates that the required \textit{Lasserre rank} is strictly bounded by the size of the deleted set plus the maximum right-hand side of the rank inequalities defining the residual stable set polytope. In the following, since the vertices of the conflict graph $H$ are in bijection with the edges of the original graph $G$, we will slightly abuse notation and identify a vertex of $H$ with its corresponding edge of $G$. We start with a necessary definition:
\begin{definition}
Let $\X$ be a hereditary class of graphs, and let $r \geq 1$. We say that $\X$ is \emph{uniformly rank-$r$ exact} if for every $J\in\X$, $\exists$ a family of subsets $\mathcal U(J)\subseteq 2^{V(J)}$ s.t:
\[
\STAB(J) = 
\Big\{x\in\mathbb R_{\geq 0}^{V(J)} : x(U)\leq \alpha(J[U])~~~ \text{ for all }~ U\in\mathcal U(J)\Big\},
\]
and $\alpha(J[U]) \leq r$ for each $U\in\mathcal U(J)$.
\end{definition}

\begin{theorem}
\label{lasserre-rainbow-theorem}
Let $\mathcal X$ be a hereditary graph class that is uniformly rank-$r$ exact. Let $(G,\C)$ be a colored graph, $H$ its conflict graph and $\KRM$ the natural LP relaxation for RM. For  $F\subseteq \C$, by $S(F)$ we denote the set of vertices of $H$ corresponding to edges of colors in $F$. If $|F|=k$, and $H-S(F)\in\X$, then
\[
\LAS_{k+r}(\KRM)^{\proj}=\STAB(H).
\]
\end{theorem}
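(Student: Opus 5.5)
The plan is to prove the two inclusions separately. The inclusion $\STAB(H)\subseteq \LAS_{k+r}(\KRM)^{\proj}$ is routine. Every stable set of $H$ is a rainbow matching, so its incidence vector is an integral point of $\KRM$, and the moment vector $y_I=\prod_{i\in I}x_i$ lies in every level of the hierarchy. Since $\LAS_t(\KRM)$ is convex and projection is linear, the projection contains the convex hull $\STAB(H)$.

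For the reverse inclusion, take $y\in\LAS_{k+r}(\KRM)$ and apply Lemma~\ref{decompositionlemma} with $S=S(F)$ and budget $k$. The hypothesis holds because every integral point of $\KRM$ obeys the color constraint $\sum_{e\in \C_i}x_e\le 1$, so it uses at most one edge from each of the $k$ deleted classes. This writes $y$ as a convex combination of vectors $z\in\LAS_r(\KRM)$ that are $0/1$ on $S(F)$. It therefore suffices to show that the projection $x$ of each such $z$ lies in $\STAB(H)$.

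Fix such a $z$ and let $T=\{e\in S(F): z_e=1\}$. Since $z\in\KRM$, the set $T$ is a stable set of $H$, i.e.\ a rainbow matching. Let $N=N_H(T)$ and $R=V(H)\setminus(S(F)\cup N)$.
\begin{itemize}
\item For $v\in N$ with neighbour $u\in T$, the edge inequality $x_u+x_v\le 1$ (valid since $\KRM\subseteq\mathrm{FRAC}(H)$) forces $x_v=0$.
\item The restriction $x|_R$ lies in $\STAB(H[R])$. Indeed $H[R]$ is an induced subgraph of $H-S(F)\in\X$, so $H[R]\in\X$ by heredity. Uniform rank-$r$ exactness describes $\STAB(H[R])$ by nonnegativity and inequalities $x(U)\le\alpha(H[U])$ with $\alpha(H[U])\le r$. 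Lemma~\ref{rank-closurelemma}, applied to $z\in\LAS_r(\KRM)$ with $U\subseteq V(H)$, gives $x(U)\le\alpha(H[U])$ whenever $\alpha(H[U])\le r$, since then $z\in\LAS_{\alpha(H[U])}$ by monotonicity of levels.
\end{itemize}

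Now write $x|_R$ as a convex combination of stable sets $I$ of $H[R]$. Each $I\cup T$ is stable in $H$, because $R$ avoids $N(T)$. The vector $x$ is $1$ on $T$, $0$ on $S(F)\setminus T$ and on $N$, and equals $x|_R$ on $R$. Hence $x$ is the same convex combination of the vectors $\chi^{I\cup T}$, so $x\in\STAB(H)$. Averaging over the decomposition gives $\proj(y)\in\STAB(H)$.

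The main obstacle is the third step: checking that Lemma~\ref{rank-closurelemma} really certifies every defining inequality of $\STAB(H[R])$ using only level $r$. Two points need care. First, levels must be monotone, so that $\LAS_r\subseteq\LAS_{r'}$ for $r'\le r$. Second, $\alpha(H[U])$ must be computed inside $H$; this is harmless because $H[R]$ is induced, so $\alpha(H[R][U])=\alpha(H[U])$. A secondary subtlety is the case $T\neq\emptyset$: there $R$ is a proper induced subgraph, so heredity of $\X$ is essential, and the residual support must lie in $R$, which the edge inequalities guarantee.
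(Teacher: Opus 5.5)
Your proof is correct and follows essentially the same route as the paper: you condition on $S(F)$ via the decomposition lemma, use edge inequalities to force $x$ to vanish on the neighbours of the selected deleted-color edges $T$ (the paper's $M_F$), certify $x|_R\in\STAB(H[R])$ via heredity and the rank-closure lemma, and recombine each residual stable set with $T$. The differences are cosmetic. You invoke $\KRM\subseteq\mathrm{FRAC}(H)$ directly, where the paper argues that the conflict must come from a shared endpoint. You also make explicit the monotonicity of levels, which the paper uses implicitly when applying the rank-closure lemma at level $r\geq\alpha(H[U])$.
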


\noindent
{\textbf{Proof idea.}} We condition on the deleted color classes via Lemma \ref{decompositionlemma} obtaining integral branches on $S(F)$. Fixing one branch selects a set $M_F$ of deleted color edges and forces every incompatible undeleted edge to disappear. The remaining compatible residual graph lies in $\X$, so Lemma \ref{rank-closurelemma} together with uniform rank-$r$ exactness implies that the restricted solution belongs to the stable set polytope of the residual graph. Recombining this residual stable-set decomposition with $M_F$ gives a convex decomposition into stable sets of $H$.

\begin{proof}
The inclusion $\STAB(H)\subseteq \LAS_{k+r}(\KRM)^{\mathrm{proj}}$ is immediate. For the reverse inclusion, let $y\in \LAS_{k+r}(\KRM)$ be a feasible Lasserre level-$(k+r)$ vector and let $x$ be its singleton projection. Since every integral rainbow matching uses at most one edge from each deleted color class, every integral feasible point uses at most $k$ vertices from $S(F)$. By Lemma \ref{decompositionlemma}, we can write $y$ as a convex combination of points in $\LAS_r(\KRM)$ that are integral on $S(F)$. Since $\STAB(H)$ is convex, it suffices to consider one such component. Thus, from now on, assume that $y\in \LAS_r(\KRM)$ and that $x$ is integral on $S(F)$. Let $M_F =\{e\in S(F):~x_e=1\}$ be the set of selected deleted-color edges, and define
\[
E' = \Big\{e\in V(H)\setminus S(F): e \text{ is nonadjacent in } H \text{ to every } f\in M_F\Big\}.
\]
We claim that $x_e=0$ for every $e\in V(H)\setminus (S(F)\cup E')$. Indeed, such a vertex $e$ is adjacent in $H$ to some $f\in M_F$. Since $f\in S(F)$ but $e\notin S(F)$ this conflict cannot arise from belonging to the same deleted color class and must therefore come from a shared endpoint in the original graph. Hence the base LP contains the degree constraint $x_e+x_f\leq 1$, and since $x_f=1$ (because $f \in M_F$), we obtain $x_e = 0$.

Now let $J=H[E']$. Since $E'\subseteq V(H)\setminus S(F)$, the graph $J$ is an induced subgraph of $H-S(F)$ and therefore $J\in\X$ because $\X$ is hereditary. Let $U\in \mathcal U(J)$. Every integral feasible rainbow matching is a stable set of $H$, hence uses at most $\alpha(J[U])$ vertices from $U$. Since $\alpha(J[U])\le r$ and $y\in \LAS_r(\KRM)$, Lemma \ref{rank-closurelemma} yields $x(U)\leq \alpha(J[U])$. Thus $x|_{E'}$ satisfies every defining inequality of $\STAB(J)$, and therefore $x|_{E'}\in \STAB(J)$.

Let $x|_{E'}=\sum_{s=1}^t \mu_s \chi^{I_s}$ where each $I_s$ is a stable set of $J$ and the $\mu_s$ forms a convex combination. For each $s$, the set $M_F\cup I_s$ is a stable set of $H$: the set $M_F$ is stable because $x|_{S(F)}$ is integral and feasible for the base constraints, each $I_s$ is stable in $J\subseteq H$, and by definition every vertex of $E'$ is nonadjacent to every vertex of $M_F$. Moreover, all coordinates outside $S(F)\cup E'$ disappear. So $x=\sum_{s=1}^t \mu_s \chi^{\,M_F\cup I_s}$, and so $x\in \STAB(H)$. This proves $\LAS_{k+r}  (\KRM)^{\mathrm{proj}}=\STAB(H)$.
\end{proof}

Rank-$1$ instances arise from perfect graphs, for which clique inequalities suffice. This exactness result follows immediately as a corollary of Theorem~\ref{lasserre-rainbow-theorem} by setting $r=1$. The chordal case is a particularly useful subclass, not exclusively by the stronger polyhedral consequences but, as we will see in the subsequent sections, by its structural and algorithmic corollaries. 

 To demonstrate that the above result generalizes to higher-rank polyhedral descriptions, we consider $h$-perfect graphs: the stable set polytope requires odd-hole inequalities in addition to clique constraints and non-negativity inequalities. By bounding the length of the residual odd holes, we strictly bound the right-hand side of the associated rank inequalities yielding Lasserre exactness at higher levels: 

\begin{corollary}
\label{cor:h-perfect-rainbow}
Let $F\subseteq \C$ with $|F|=k$. Suppose that $H-S(F)$ is $h$-perfect and that every odd hole of $H-S(F)$ has length  $\leq 2r+1$. Then $\LAS_{k+r}(\KRM)^{\proj}=\STAB(H)$.
\end{corollary}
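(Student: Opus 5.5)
The plan is to obtain the corollary from Theorem~\ref{lasserre-rainbow-theorem}, applied with $F$ as given and with a suitable hereditary class $\X$ shown to be uniformly rank-$r$ exact. The natural choice is
\[
\X_r=\{J : J \text{ is } h\text{-perfect and every odd hole of } J \text{ has length } \le 2r+1\}.
\]
By hypothesis $H-S(F)\in\X_r$, so once the conditions on $\X_r$ are checked the theorem yields $\LAS_{k+r}(\KRM)^{\proj}=\STAB(H)$.

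\textbf{Step 1: $\X_r$ is hereditary.} Let $J\in\X_r$ and let $J'$ be an induced subgraph of $J$.
\begin{itemize}
\item $h$-perfection is closed under taking induced subgraphs. This is standard: $\STAB(J')$ is the face of $\STAB(J)$ cut out by setting $x_v=0$ for $v\notin V(J')$. The clique and odd-hole inequalities of $J$ restrict to clique and odd-hole inequalities of $J'$, or to weaker ones.
\item Every odd hole of $J'$ is an induced cycle of $J'$ and hence of $J$. So it is an odd hole of $J$ and has length $\le 2r+1$.
\end{itemize}

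\textbf{Step 2: $\X_r$ is uniformly rank-$r$ exact.} For $J\in\X_r$, take $\mathcal U(J)$ to consist of the following sets:
\begin{itemize}
\item all cliques $Q$ of $J$, for which $\alpha(J[Q])=1$;
\item the vertex sets $V(C)$ of all odd holes $C$ of $J$.
\end{itemize}
For an odd hole $C$ of length $2s+1$, the subgraph $J[V(C)]$ is the chordless cycle itself, so $\alpha(J[V(C)])=s$. The odd-hole inequality $x(V(C))\le s$ is therefore exactly the rank inequality with right-hand side $\alpha$. Since $2s+1\le 2r+1$, we get $s\le r$, and $1\le r$ since $r\ge1$.

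Nonnegativity is built into the definition of uniform rank exactness. By $h$-perfection,
\[
\STAB(J)=\{x\ge 0 : x(U)\le\alpha(J[U]) \text{ for all } U\in\mathcal U(J)\},
\]
which is what the definition requires. The bounds $x\le 1$ are implied by the single-vertex clique inequalities.

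\textbf{Step 3: apply Theorem~\ref{lasserre-rainbow-theorem}} with $\X=\X_r$ and $|F|=k$.

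\textbf{Expected main obstacle.} The only genuinely delicate point is Step 1, the heredity of $h$-perfection. We must confirm that the restricted system still describes $\STAB(J')$. This follows from the face argument: intersect the description of $\STAB(J)$ with $\{x_v=0 : v\notin V(J')\}$. Each clique inequality of $J$ then becomes the clique inequality of $J'$ on $Q\cap V(J')$. Each odd-hole inequality $x(V(C))\le s$ becomes $x(V(C)\cap V(J'))\le s$. If $V(C)\not\subseteq V(J')$, this set induces a union of paths in $J'$, so the restricted inequality is implied by edge (clique) inequalities. Hence $\STAB(J')$ is described by the clique and odd-hole inequalities of $J'$, and $J'$ is $h$-perfect.

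A small edge case is a graph in $\X_r$ with no odd holes at all. This is harmless: it is then perfect, and only clique sets appear in $\mathcal U(J)$.
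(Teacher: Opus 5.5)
Your proposal is correct and follows the paper's proof: apply Theorem~\ref{lasserre-rainbow-theorem} to the class of $h$-perfect graphs whose odd holes have length at most $2r+1$, and read the clique and odd-hole inequalities as rank inequalities with right-hand side at most $r$. Your face argument showing that this class is hereditary is a useful addition: the paper takes heredity for granted, yet the theorem's proof needs it, because it applies exactness to the induced subgraph $H[E']$ rather than to $H-S(F)$ itself.
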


\begin{proof}
We apply Theorem~\ref{lasserre-rainbow-theorem} with $\X$ equal to the class of $h$-perfect graphs whose odd holes have length at most $2r+1$. Let $J\in \mathcal \X$. By definition the polytope $\STAB(J)$ is described by the following inequalities: \textit{(i)} \textit{nonnegativity inequalities} $x_v\ge 0$ for all $v\in V(J)$, \textit{(ii)} \textit{clique inequalities} $x(Q)\le 1$ for every clique $Q\subseteq V(J)$, \textit{(iii)} \textit{odd-hole inequalities} $x(V(C))\le \frac{|V(C)|-1}{2}$ for every odd hole $C$ of $J$. We show that every defining inequality is a stable-set rank inequality with right-hand side at most $r$. First, for a clique $Q$, we have $\alpha(J[Q])=1$, so the clique inequality is exactly the rank inequality $x(Q)\le \alpha(J[Q])=1$. Second, let $C$ be an odd hole of $J$, say of length $|V(C)|=2\ell+1$. Then the odd-hole inequality is $x(V(C))\le \ell$. Since $C$ is an induced odd cycle $\alpha(C)=\ell$. Thus this inequality is precisely the rank inequality $x(V(C))\leq \alpha(J[V(C)])$. By assumption, every odd hole of $J$ has length at most $2r+1$, so every such $\ell$ satisfies $\ell\leq r$. Therefore every defining inequality of $\STAB(J)$ is of the form $x(U)\le \alpha(J[U])$ 
with right-hand side at most $r$. Thus the hypotheses of Theorem~\ref{lasserre-rainbow-theorem} are satisfied, and we conclude that $\LAS_{k+r}(\KRM)^{\proj}=\STAB(H)$.
\end{proof}

\subsection{Integrality Gap on Fewer Lasserre Rounds}
The following statement quantifies the integrality gap in cases where we may spend only a limited number of Lasserre rounds.  

\begin{theorem}
\label{hperfect-gap}
Let $(G,\C)$ be a rainbow matching instance, let $H$ be its conflict graph, let $F\subseteq \C$ be a set of $k$ colors. Let $J = H-S(F)$. Assume that $J$ is $h$-perfect. If $J$ is perfect then the previous result applies. Otherwise, let $2\ell+1$ be the length of a shortest odd hole in $J$. Then for every weight vector $w\in \mathbb R_{\ge 0}^{V(H)}$, we have
\[
\max \Big\{w^\top x: ~x\in \LAS_{k+1}(\KRM)^{\proj} \Big\}
\leq 
\left(1+\frac{1}{2\ell}\right)
\max\Big\{w^\top x: ~x\in \STAB(H) \Big\}.
\]
I.e., the integrality gap of $\LAS_{k+1}(\KRM)$ is at most $1+\frac{1}{2\ell}$.
\end{theorem}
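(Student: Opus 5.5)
We follow the proof of Theorem~\ref{lasserre-rainbow-theorem} with $r=1$, and replace the odd-hole step with a scaling argument. Let $y\in\LAS_{k+1}(\KRM)$ with singleton projection $x$. Every rainbow matching uses at most one edge per deleted color, so Lemma~\ref{decompositionlemma} with $S=S(F)$ writes $y$ as a convex combination of points in $\LAS_1(\KRM)$ that are integral on $S(F)$. Both sides of the desired inequality are linear in $x$ (after taking the maximum over $\STAB(H)$ on the right), so it suffices to bound $w^\top x$ for a single branch.

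For one branch, define $M_F$ and $E'$ exactly as in the proof of Theorem~\ref{lasserre-rainbow-theorem}. The same argument shows that $x_e=0$ for every $e\notin S(F)\cup E'$. Set $J'=H[E']$, an induced subgraph of $J$. It is therefore $h$-perfect, and each of its odd holes is an odd hole of $J$, hence has length at least $2\ell+1$. By Corollary~\ref{clique-closure}, $x|_{E'}$ satisfies every clique inequality of $J'$, so $x|_{E'}\in\QSTAB(J')$.

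The key claim is that $\frac{2\ell}{2\ell+1}\,x|_{E'}\in\STAB(J')$. Since $J'$ is $h$-perfect, we only need to check the clique and odd-hole inequalities for the scaled point. Clique inequalities survive scaling down. For an odd hole $C$ of length $2m+1\geq 2\ell+1$, the edge inequalities give $x(V(C))\leq \frac{2m+1}{2}$. Hence the scaled point has
\[
\frac{2\ell}{2\ell+1}\cdot\frac{2m+1}{2}\leq m,
\]
which is equivalent to $\ell(2m+1)\le m(2\ell+1)$, i.e.\ $\ell\le m$. This holds.

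Scaling also shrinks the $M_F$ part, so we decompose the branch value instead of the whole point. Write $\frac{2\ell}{2\ell+1}x|_{E'}=\sum_s\mu_s\chi^{I_s}$ with $I_s$ stable in $J'$. Each $M_F\cup I_s$ is stable in $H$, as in the earlier proof. Hence
\[
w(M_F)+\frac{2\ell}{2\ell+1}\,w^\top x|_{E'}\le \OPT_w,
\]
where $\OPT_w=\max\{w^\top z: z\in\STAB(H)\}$. Since $w\ge0$, the branch value satisfies
\[
w^\top x = w(M_F)+w^\top x|_{E'}\le \frac{2\ell+1}{2\ell}\Bigl(w(M_F)+\frac{2\ell}{2\ell+1}w^\top x|_{E'}\Bigr)\le\Bigl(1+\frac{1}{2\ell}\Bigr)\OPT_w.
\]
Averaging over the branches completes the argument.

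The main obstacle is the scaling claim. It needs the full $h$-perfect description of $\STAB(J')$, including the fact that $J'$ inherits $h$-perfection and the lower bound $2\ell+1$ on odd-hole lengths from $J$. It also needs to handle $J'$ being perfect, in which case scaling is unnecessary. Monotonicity in $w\ge0$ is what lets us scale only the residual part.
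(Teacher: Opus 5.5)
Your proposal is correct and follows essentially the paper's proof: decompose on $S(F)$ down to $\LAS_1(\KRM)$, kill coordinates adjacent to $M_F$ via degree constraints, use clique closure on the residual part, scale by $\frac{2\ell}{2\ell+1}$ by summing edge inequalities around odd holes, and use $w\ge 0$ to absorb the integral part $w(M_F)$ before averaging over branches. The only cosmetic difference is that you apply the scaling directly to $H[E']$, which relies on $h$-perfection being inherited by induced subgraphs (true, though you state it without justification), whereas the paper extends $x|_{E'}$ by zero to $J$, applies Proposition~\ref{hperfect-scaling} there, and then restricts the resulting stable sets to $E'$.
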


Before we turn to the proof, we prove the following result: 

\begin{proposition}[Scaling bound for $h$-perfect graphs]\label{hperfect-scaling}
Let $J$ be an $h$-perfect graph, and assume that $J$ is not perfect. Let $2\ell+1$ be the length of a shortest odd hole in $J$, and define
\[
\beta_\ell = 1 + \frac{1}{2\ell}
\qquad\text{and}\qquad
\lambda_\ell=\frac{1}{\beta_\ell} = \frac{2\ell}{2\ell+1}.
\]
Then $\lambda_\ell\, P(J)\subseteq \STAB(J)$, where
\[
P(J) = \Big\{x\in \mathbb R_{\geq 0}^{V(J)} : x(Q) \leq 1 \text{ for every clique }Q\subseteq V(J)\Big\}
\]
is the clique relaxation of the stable set polytope.
\end{proposition}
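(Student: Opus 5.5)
Since $J$ is $h$-perfect, $\STAB(J)$ is exactly the set of nonnegative vectors satisfying all clique inequalities and all odd-hole inequalities. To show $\lambda_\ell P(J)\subseteq \STAB(J)$, I would fix $x\in P(J)$ and check that $z=\lambda_\ell x$ satisfies each of these three families of inequalities. Nonnegativity is immediate. The clique inequalities also hold directly, because $z(Q)=\lambda_\ell x(Q)\le \lambda_\ell\le 1$.

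The real work is in the odd-hole inequalities. Let $C$ be an odd hole of $J$ of length $2m+1$, where $m\ge \ell$ because $2\ell+1$ is the shortest odd hole. Each of the $2m+1$ edges of $C$ is a clique of size two in $J$, so $x_u+x_v\le 1$ for every edge $uv$ of $C$. Each vertex of $C$ lies on exactly two edges of $C$. Summing these edge inequalities therefore gives $2x(V(C))\le 2m+1$, that is, $x(V(C))\le \frac{2m+1}{2}$. Consequently
\[
z(V(C))\le \frac{2\ell}{2\ell+1}\cdot\frac{2m+1}{2}=\frac{\ell(2m+1)}{2\ell+1}.
\]
It remains to check that this is at most $m$. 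The inequality $\ell(2m+1)\le m(2\ell+1)$ rearranges to $\ell\le m$, which holds by the choice of $\ell$.

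I expect the only delicate point to be the monotonicity in $m$. The required scaling factor for a hole of length $2m+1$ is $\frac{2m}{2m+1}$, which increases with $m$. Hence the shortest hole is the binding case, and $\lambda_\ell$ works uniformly for all odd holes. With all defining inequalities verified, we get $z\in\STAB(J)$. Since $x\in P(J)$ was arbitrary, this gives the proposition.
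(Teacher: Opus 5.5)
Your proof is correct and follows the paper's argument: nonnegativity and the clique inequalities are immediate, and summing the $2m+1$ edge inequalities around an odd hole gives $x(V(C))\le m+\tfrac12$. Scaling by $\lambda_\ell$ then brings this to at most $m$ exactly because $\ell\le m$.
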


\begin{proof}
Fix any $z\in P(J)$. We claim that $\lambda_\ell z\in \STAB(J)$. Since $J$ is $h$-perfect, it suffices to verify that $\lambda_\ell z$ satisfies: (i) nonnegativity, (ii)  every clique inequality, and (iii) every odd-hole inequality.

Nonnegativity is immediate, because $z\ge 0$ and $\lambda_\ell>0$. Next, let $Q$ be any clique of $J$. Since $z\in P(J)$, we have $z(Q)\le 1$. Because $\lambda_\ell\le 1$, it follows that $\lambda_\ell z(Q)\le z(Q)\le 1$, so every clique inequality is satisfied. It remains to check odd-hole inequalities. Let $C=(v_1,v_2,\dots,v_{2m+1})$ be an odd hole of $J$. Since $2\ell+1$ is the length of a shortest odd hole, we have $m\ge \ell$. Every edge of $C$ is a clique of size two, and therefore $z\in P(J)$ satisfies $z_{v_i}+z_{v_{i+1}}\leq 1$ for $i=1,\dots,2m+1$, where indices are taken cyclically modulo $2m+1$. Summing these  inequalities around the cycle gives $2\,z(C)\le 2m+1$, and hence $z(C)\le m+\frac{1}{2}$. Therefore
\[
\lambda_\ell z(C) ~\leq~ \frac{2\ell}{2\ell+1}\left(m+\frac12\right) 
~=~
\frac{2\ell m+\ell}{2\ell+1} ~\leq~ m,
\]
where the last inequality is equivalent to $\ell\le m$, which is true. Thus the odd-hole inequality for $C$ is satisfied by $\lambda_\ell z$.

We have shown that $\lambda_\ell z$ satisfies all nonnegativity, clique, and odd-hole inequalities of $J$. Since $J$ is $h$-perfect, these inequalities describe $\STAB(J)$, and therefore $\lambda_\ell z\in \STAB(J)$. This proves $\lambda_\ell P(J)\subseteq \STAB(J)$.
\end{proof}

Now we may prove the Theorem itself:
\begin{proof}
If $J$ is perfect, we have already established exactness. We therefore assume that $J$ is not perfect, and let $2\ell+1$ be the length of a shortest odd hole in $J$. Define 
\[
\OPT_{\LAS} = \max\big\{w^\top x: ~x\in \LAS_{k+1}(\KRM)^{\proj}\big\},
\qquad
\OPT = \max\big\{w^\top x: ~x\in \STAB(H)\big\}.
\]
Let us fix a point $y\in LAS_{k+1}(\KRM)$ whose singleton projection $x$ attains $\OPT_{\LAS}$. By the decomposition lemma applied to the deleted set $S(F)$, we can write $y=\sum_{s} \lambda_s\, y^{(s)}$,
where: \textit{(i)} $\lambda_s\geq 0$ for all $s$, \textit{(ii)} $\sum_s \lambda_s=1$,
\textit{(iii)} each $y^{(s)}\in \LAS_1(\KRM)$, and (iv) each $y^{(s)}$ is integral on the coordinates indexed by $S(F)$. Let $x^{(s)}$ denote the singleton projection of $y^{(s)}$. Fix one branch $s$. Define
\[
M_s = \big\{e\in S(F): x^{(s)}_e=1 \big\}, ~
E_s' = \big\{e\in V(H)\setminus S(F): e \text{ nonadjacent in } H \text{ to each } f\in M_s\big \}.
\]

We first show that $x^{(s)}_e=0$ for each $e\notin S(F)\cup E_s'$. Indeed, such a vertex $e$ is adjacent in $H$ to some $f\in M_s$. In the conflict graph, adjacency means that the corresponding edges of the original graph either share an endpoint or have the same color. Since $f\in S(F)$ and $e\notin S(F)$, the two edges cannot have the same deleted color. Hence they must share an endpoint. The base rainbow-matching formulation therefore contains a valid degree constraint $x_e+x_f\le 1$. Because $f\in M_s$, we have $x^{(s)}_f=1$, and hence $x^{(s)}_e=0$. Thus the support of $x^{(s)}$ is contained in $S(F)\cup E_s'$, and the objective splits as
\[
w^\top x^{(s)} = w(M_s)+ w^\top x^{(s)}|_{E_s'}.
\]
Now define a vector $z^{(s)}\in \mathbb R^{V(J)}$ by $ z^{(s)}_e= x^{(s)}_e$ if   $e\in E_s'$ and zero otherwise. We claim that $z^{(s)}\in P(J)$, where $P(J)$ is the clique relaxation of $J$.

First, $z^{(s)}\ge 0$ coordinatewise. Next, let $Q$ be any clique of $J$. Then $Q\cap E_s'$ is also a clique of $J$, hence a clique of the full conflict graph $H$. Since $y^{(s)}\in \LAS_1(\KRM)$, the level-1 clique-closure corollary gives $x^{(s)}(Q\cap E_s')\leq 1$. But $
z^{(s)}(Q)=x^{(s)}(Q\cap E_s')$, because $z^{(s)}$ agrees with $x^{(s)}$ on $E_s'$ and is zero outside $E_s'$. Therefore $z^{(s)}(Q)\leq 1$.
This proves $z^{(s)}\in P(J)$.

We now apply Proposition~\ref{hperfect-scaling} to $J$. Since $J$ is $h$-perfect and has shortest odd hole of length $2\ell+1$, we obtain
\[
\frac{2\ell}{2\ell+1}\, z^{(s)} \in \STAB(J).
\]
Restricting this vector to the induced subgraph $J_s =J[E_s']$ we get that 
\[
\frac{2\ell}{2\ell+1}\, x^{(s)}|_{E_s'} \in \STAB(J_s).
\]
Indeed, if $\frac{2\ell}{2\ell+1}\, z^{(s)}=\sum_t \mu_t \chi^{I_t}$
is a convex combination of stable sets $I_t$ of $J$, then restricting each $I_t$ to $E_s'$ yields a convex combination of stable sets of $J_s$.  Hence, writing $\alpha_w(J_s)$ for the maximum weight of a stable set in $J_s$, we obtain
\[
w^\top x^{(s)}|_{E_s'} \leq \left(1+\frac{1}{2\ell}\right)\alpha_w(J_s).
\]

On the other hand, every stable set of $J_s$ can be combined with $M_s$ to form a stable set of the full conflict graph $H$. Then the set $M_s$ is stable because $x^{(s)}|_{S(F)}$ is integral and feasible for the base rainbow matching constraints, and every vertex of $E_s'$ is nonadjacent in $H$ to every vertex of $M_s$ by construction. Therefore
\[
\OPT \geq w(M_s)+\alpha_w(J_s).
\]
Combining the previous inequalities yields
\[
w^\top x^{(s)}
=
w(M_s)+w^\top x^{(s)}|_{E_s'}
\le
w(M_s)+\left(1+\frac{1}{2\ell}\right)\alpha_w(J_s).
\]
Since $w(M_s)\geq 0$ and $1+\frac{1}{2\ell} \geq  1$, we  get that
\[
w^\top x^{(s)}
\leq
\left(1+\frac{1}{2\ell}\right)\bigl(w(M_s)+\alpha_w(J_s)\bigr)
\leq
\left(1+\frac{1}{2\ell}\right)\operatorname{OPT}.
\]
This bound holds for every branch $s$. Averaging over the convex decomposition gives us the desired result.
\end{proof}

\section{Chordal Topology, Deletion Parameters and Algorithms}
We now study the same grouped deletion parameter from a structural point of view. In the previous section deleted colors mattered because they defined a small set of coordinates on which Lasserre could condition. Here they matter because the exact same colors influence the topology of the color-intersection graph since articulation colors give clique separators in the conflict graph, and the relevant obstructions (induced holes in the chordal case) are not global but are visible  by the block structure of the color-intersection graph.


\subsection{Clique-sum decomposition from the color-intersection graph}

In the next  statements we isolate the structural reasons why the chordal case is algorithmically tractable: Lemma \ref{clique-sum-decomp}  shows that the conflict graph is assembled from the block subgraphs  $H_B$ by clique-sums along articulation colors of the color-intersection graph $\Gamma$. Lemma \ref{holes-cliquesum} is a very basic fact saying that induced holes cannot cross a clique-sum. Corollary \ref{locality} is the key consequence: every induced hole of the conflict graph \textit{is already contained in a single block subgraph $H_B$}. Thus we can check chordality of $H$ in a blockwise way. This statement is the structural input from the undeleted instance that will later exploit for the dynamic program.

\begin{lemma}
\label{clique-sum-decomp}
Let $B(\Gamma)$ denote the set of blocks of the color-intersection graph $\Gamma$. For each block $B\in B(\Gamma)$ we define $E(B)=\{e\in E(G): c(e)\in V(B)\}$, $H_B = H[E(B)]$. Then the conflict graph $H$ can be obtained as an iterated clique-sum of the block subgraphs $\{H_B: B\in B(\Gamma)\}$, where the gluing cliques are exactly the color cliques corresponding to articulation colors of $\Gamma$.
\end{lemma}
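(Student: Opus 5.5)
The plan is to use the block--cut tree of $\Gamma$ and induct on the number of blocks, peeling off a leaf block at each step. I will work with the following facts.

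\emph{Cover.} Every vertex of $H$ is an edge $e$ of $G$ whose color $c(e)$ lies in at least one block of $\Gamma$ (an isolated color forms its own block). Hence $V(H)=\bigcup_B E(B)$.

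\emph{Intersections.} Two distinct blocks share at most one vertex of $\Gamma$, and any shared vertex is an articulation color $a$. So $E(B)\cap E(B')$ is either empty or equal to $S(\{a\})$. The set $S(\{a\})$ is a clique in $H$, because the augmented graph completes each color class.

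\emph{Edges stay inside blocks.} Suppose $e,f$ are adjacent in $H$. Either $c(e)=c(f)$, or $e$ and $f$ share an endpoint of $G$; in the latter case $c(e)c(f)$ is an edge of $\Gamma$. Every edge of $\Gamma$ lies in exactly one block, and a single color lies in some block. Therefore every edge of $H$ lies inside some $H_B$, and each $H_B$ is an induced subgraph of $H$.

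For the induction, if $\Gamma$ has a single block there is nothing to prove. Otherwise choose a leaf block $B$ of the block--cut tree, attached at the articulation color $a$. Let $\Gamma'$ be $\Gamma$ with the colors of $V(B)\setminus\{a\}$ removed. Put $W=E(B)$ and $W'=\bigcup_{B'\neq B}E(B')$.

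\emph{Step (i).} Show $W\cap W'=S(\{a\})$. This holds because every color of $B$ other than $a$ belongs to no other block.

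\emph{Step (ii).} Show that there are no $H$-edges between $W\setminus S(\{a\})$ and $W'\setminus S(\{a\})$. Such an edge would give either two equal colors, which is impossible since the colors differ, or a $\Gamma$-edge between a color of $B-a$ and a color outside $B$. That $\Gamma$-edge would lie in some block containing a non-articulation vertex of $B$, hence in $B$ itself, contradicting the fact that its other endpoint is outside $B$.

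Together, (i) and (ii) say that $H$ is the clique-sum of $H[W]=H_B$ and $H[W']$ along the clique $S(\{a\})$.

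\emph{Step (iii).} Identify $H[W']$ with the conflict graph of the subinstance $G'$ consisting of the edges whose colors lie in $\Gamma'$. The blocks of the color-intersection graph of $G'$ are exactly the blocks of $\Gamma$ other than $B$, so the induction hypothesis applies to $H[W']$. Note that $a$ may cease to be an articulation color in $\Gamma'$; the gluing cliques that remain are still color cliques of articulation colors of the original $\Gamma$.

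\emph{Main obstacle.} The hard part is step (iii) together with the leaf argument in (ii). One must verify that $\Gamma'$ is indeed the color-intersection graph of $G'$: restricting to $G'$ removes neither colors nor incidences among the remaining colors, so adjacency is preserved. One must also verify that deleting a leaf block leaves the other blocks unchanged as blocks. A further subtlety is that one articulation color may be glued several times, once for each block containing it. This is harmless, since each gluing is along the same clique $S(\{a\})$, but it should be stated so that the phrase ``gluing cliques are exactly the articulation color cliques'' is read with multiplicity.

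Finally, if $\Gamma$ is disconnected, components are combined by clique-sums along the empty clique, i.e.\ disjoint unions. I will remark that this degenerate case is treated by convention.
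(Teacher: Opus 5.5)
Your proof is correct and matches the paper's argument: induction on the number of blocks, peeling off a leaf block $B$ attached at the articulation color $a$, writing $H$ as the clique-sum of $H_B$ and $H[W']$ along $S(\{a\})$, and treating $H[W']$ as the conflict graph of a subinstance with one fewer block. The differences are minor. You check that no $H$-edges cross directly from the block structure, whereas the paper uses the components of $\Gamma - a$ via Lemma~\ref{articulation-cut}. You are also more explicit than the paper about gluing multiplicities, about $a$ possibly ceasing to be an articulation color in $\Gamma'$, and about the disconnected case, which the paper's proof leaves implicit.
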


We first need some lemmata that would help us built our arguments towards the proof. We start with the result that shows that articulation colors induce clique cutsets.

\begin{lemma}
\label{articulation-cut}
Let $c$ be an articulation vertex of the color-intersection graph $\Gamma$, and let $A_1,\dots,A_t$ be the connected components of $\Gamma-c$. For each $i\in[t]$ we define $W_i =\{e\in E(G): c(e)\in V(A_i)\}$ and  $E_c =\{e\in E(G): c(e)=c\}$. Then:
\begin{enumerate}
    \item $E_c$ is a clique in the conflict graph $H$
    \item if $i\neq j$ then there are no edges in $H$ between $W_i$ and $W_j$.
\end{enumerate}
i.e., $E_c$ is a clique cutset of $H$ separating the sets $W_1,\dots,W_t$.
\end{lemma}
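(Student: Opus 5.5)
Both items follow directly from the definitions of $H$ and $\Gamma$, so I will argue them separately and then combine them into the clique-cutset conclusion.

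\emph{Item 1.} In the conflict graph $H$, two vertices are adjacent whenever the corresponding edges of $G$ have the same color. Hence any two distinct members of $E_c$ are adjacent, and $E_c$ is a clique. Equivalently, this is the augmentation step that completes each color class into a clique, so no further argument is needed.

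\emph{Item 2.} Take $e\in W_i$ and $f\in W_j$ with $i\neq j$, and suppose for contradiction that $ef\in E(H)$.
\begin{itemize}
\item There are only two reasons for adjacency in $H$: equal colors, or a shared endpoint in $G$.
\item Equal colors is impossible. We have $c(e)\in V(A_i)$ and $c(f)\in V(A_j)$, and these are distinct components of $\Gamma-c$, hence disjoint vertex sets, so $c(e)\neq c(f)$.
\item So $e$ and $f$ share an endpoint $v$. By the definition of $\Gamma$, the colors $c(e)$ and $c(f)$ are then adjacent in $\Gamma$: some vertex of $G$ is incident to edges of both colors, and the two colors are distinct.
\item Neither color equals $c$, because both lie in $\Gamma-c$. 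So the edge $c(e)c(f)$ of $\Gamma$ survives in $\Gamma-c$.
\item That edge joins $A_i$ and $A_j$, contradicting that they are different connected components.
\end{itemize}
Therefore there are no $H$-edges between $W_i$ and $W_j$.

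\emph{Conclusion.} The vertex set of $H$ is partitioned as $E_c\cup W_1\cup\dots\cup W_t$, since every color is either $c$ or lies in exactly one component $A_i$. By item 2, the graph $H-E_c$ has no edges between distinct $W_i$. Each $W_i$ is nonempty: every color names a nonempty class of edges, and $t\geq 2$ because $c$ is an articulation vertex. So $H-E_c$ has at least $t\geq 2$ components, refined along the $W_i$. Together with item 1, this makes $E_c$ a clique cutset of $H$ separating $W_1,\dots,W_t$.

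The only subtle point is the convention that colors in $\Gamma$ correspond to nonempty edge classes. I will state it explicitly: colors with no edges can be discarded without changing $H$.
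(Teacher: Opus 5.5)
Your proof is correct and matches the paper's argument: item 1 follows from the same-color adjacency rule, and for item 2 you rule out a same-color conflict (distinct components of $\Gamma-c$ have disjoint color sets) and a shared-endpoint conflict (which would give an edge of $\Gamma-c$ between $A_i$ and $A_j$). You also observe that each $W_i$ is nonempty, so $H-E_c$ is actually disconnected and $E_c$ is a genuine cutset rather than just a clique with no crossing edges; the paper leaves this step implicit.
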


\begin{proof}
We first prove that $E_c$ is a clique in $H$. Any two distinct edges in $E_c$ have the same color $c$  and the vertices of the conflict graph are adjacent if the corresponding edges have
the same color. Therefore every two distinct vertices of $E_c$ are adjacent in $H$ and so so $E_c$ is a clique.

Now we fix two distinct indices $i\neq j$, and we let $e\in W_i, f\in W_j$. We claim that $e$ and $f$ are nonadjacent in $H$. Since $e\in W_i$ and $f\in W_j$, their colors satisfy $c(e)\in V(A_i), c(f)\in V(A_j)$ i.e., neither edge has color $c$ and the two colors belong to different connected components of $\Gamma-c$.

Suppose for contradiction that $e$ and $f$ are adjacent in $H$. Then by the definition of the conflict graph, either $e$ and $f$ have the same color, or $e$ and $f$ share a common endpoint in the original graph. The first case is impossible, because $c(e)\in A_i$ and $c(f)\in A_j$ with $i\neq j$, so the colors are distinct. In the second case, if $e$ and $f$ share an endpoint in the original graph, then the two colors $c(e)$ and $c(f)$ are adjacent in the color-intersection graph $\Gamma$. But this would give an
edge between a vertex of $A_i$ and a vertex of $A_j$ in $\Gamma-c$, contradicting that $A_i$ and $A_j$ are distinct connected components of $\Gamma-c$. Thus neither case can occur, and therefore $e$ and $f$ are nonadjacent in $H$. 
The set $E_c$ is therefore a clique whose removal separates the sets
$W_1,\dots,W_t$, so $E_c$ is a clique cutset of $H$.
\end{proof}

Now we prove the Clique-decomposition Lemma \ref{clique-sum-decomp} of the conflict graph: 

\begin{proof}
We argue by induction on the number of blocks of $\Gamma$. If $\Gamma$ has only one block, then $B(\Gamma)=\{\Gamma\}$, and by definition $E(\Gamma)=E(G), H_\Gamma=H$. So the statement is immediate.

Assume now that $\Gamma$ has at least two blocks. Consider the block-cut tree of $\Gamma$, and choose a leaf block $B$. Since $B$ is a leaf of the block-cut tree, there exists a unique articulation color $c\in V(B)$ through which $B$ is attached to the rest of the graph. Let
\[
E_B = E(B),
\qquad
E_{\mathrm{rest}} = \bigcup_{B'\in B(\Gamma)\setminus\{B\}} E(B').
\]
Define $H_1 =H[E_B], H_2 = H[E_{\mathrm{rest}}]$. We claim that \textit{(i)} $H = H_1 \cup H_2$, \textit{(ii)} $H_1 \cap H_2 = E_c$, and \textit{(iii)} there are no edges of $H$ between $V(H_1)\setminus E_c$ and $V(H_2)\setminus E_c$.

The first claim is true because every color of $\Gamma$ belongs either to the leaf block $B$ or to one of the remaining blocks, so every edge of the original graph belongs to $E_B\cup E_{\mathrm{rest}}$.  For the second claim, note that the only colors that can belong to both $B$ and another block are articulation colors. Since $B$ is a leaf block, the only such articulation color is $c$. Therefore the overlap of $E_B$ and $E_{\mathrm{rest}}$ consists exactly of the edges of color $c$, i.e., $E_B\cap E_{\mathrm{rest}}=E_c$. Hence $H_1\cap H_2=E_c$. For the third claim, every color appearing in $V(H_1)\setminus E_c$ lies in the unique connected component of $\Gamma-c$ that contains $B\setminus\{c\}$, while every color appearing in $V(H_2)\setminus E_c$ lies in the union of the other connected components of $\Gamma-c$.
Therefore Lemma~\ref{articulation-cut} applies and shows that there are no edges between $V(H_1)\setminus E_c$ and $V(H_2) \setminus E_c$.

Since $E_c$ is a clique by Lemma \ref{articulation-cut}, it follows that $H$ is the clique-sum of $H_1$ and $H_2$ over the clique $E_c$.

Finally, the graph $H_2$ is precisely the conflict graph obtained from the graph whose colors are the vertices of $\Gamma$ outside the leaf block $B$, and whose color-intersection graph has one fewer
block. By the induction hypothesis, $H_2$ is itself an iterated clique-sum of the block subgraphs $H_{B'}$ for $B'\neq B$. Gluing back $H_1=H_B$ over the clique $E_c$ gives an iterated clique-sum
representation of $H$ over all block subgraphs $\{H_B : B\in B(\Gamma)\}$.
\end{proof}

The following Lemma formalizes the intuitive notion that once two graphs are glued together, an induced hole cannot use vertices from both sides.

\begin{lemma}
\label{holes-cliquesum}
Let $G_1$ and $G_2$ be graphs such that $G = G_1 \cup G_2, K = V(G_1)\cap V(G_2)$, where $K$ is a clique, and there are no edges between $V(G_1)\setminus K$ and $V(G_2)\setminus K$. Then every induced cycle of $G$ of length at least $4$ is contained entirely
in $G_1$ or entirely in $G_2$.
\end{lemma}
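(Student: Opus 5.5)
We argue by contradiction, using only the separation property of $K$ and the fact that $K$ is a clique. Let $C=(v_1,\dots,v_m)$, $m\ge 4$, be an induced cycle of $G$. Suppose $C$ is contained in neither $G_1$ nor $G_2$. Since every edge of $G=G_1\cup G_2$ lies in $G_1$ or in $G_2$, and $G_i$ is induced on its vertex set, the hypothesis forces $C$ to have a vertex $a\in V(G_1)\setminus K$ and a vertex $b\in V(G_2)\setminus K$. (If all vertices of $C$ lay in $V(G_1)$, then $C$ would be contained in $G_1$.)

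\textbf{Step 1 (the cycle must cross $K$ twice).} Removing $a$ and $b$ from the cycle leaves two internally disjoint $a$--$b$ paths $P$ and $P'$ along $C$; each has at least one interior vertex when $a,b$ are nonadjacent. In fact $a$ and $b$ are nonadjacent, since there are no edges between $V(G_1)\setminus K$ and $V(G_2)\setminus K$. Each path starts in $V(G_1)\setminus K$ and ends in $V(G_2)\setminus K$. Since no edge joins these two sets, each path must contain a vertex of $K$. Let $u\in K$ be an interior vertex of $P$ and $u'\in K$ an interior vertex of $P'$. Then $u\neq u'$, because the two paths are internally disjoint.

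\textbf{Step 2 (the chord).} Since $K$ is a clique, $uu'\in E(G)$, and because $C$ is induced, $uu'$ must be an edge of $C$. The main point to handle carefully is this case. Since $u$ is interior to $P$ and $u'$ is interior to $P'$, an edge $uu'$ of $C$ would have to be an edge of $P$ or of $P'$; every edge of $C$ lies on exactly one of the two paths. But $P$ does not contain $u'$, and $P'$ does not contain $u$, so neither path contains the edge $uu'$. This is a contradiction. Hence $uu'$ is a chord of $C$, contradicting that $C$ is induced.

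Note that this argument never uses $m\ge4$ beyond guaranteeing that $a,b$ exist as stated. Indeed, triangles can genuinely cross, which is why the lemma concerns only lengths at least $4$. For a triangle, however, the crossing assumption already fails, because $a$ and $b$ would have to be adjacent. So the argument is consistent, and the conclusion follows: $V(C)\subseteq V(G_1)$ or $V(C)\subseteq V(G_2)$.
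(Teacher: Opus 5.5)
Your proof is correct and follows the same idea as the paper: a crossing induced cycle must meet $K$ in two distinct vertices, and the clique edge between them is a chord. Choosing $u$ and $u'$ on the two different $a$--$b$ arcs makes their nonconsecutiveness immediate, which is tidier than the paper's version (it picks two arbitrary vertices of $K\cap V(C)$ and tacitly needs that $C$ contains no third one); your closing remark on triangles is muddled (no triangle can cross either, and $m\ge 4$ is never used), but it plays no role in the argument.
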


\begin{proof}
Let $C$ be an induced cycle of $G$ of length at least $4$. Suppose for contradiction that $C$ uses vertices from both $V(G_1)\setminus K$ and $V(G_2)\setminus K$.

Because there are no edges between $V(G_1)\setminus K$ and $V(G_2)\setminus K$, every time the cycle passes from one side to the other it must do so through a vertex of $K$. In particular, the cycle must contain at least two vertices of $K$: with only one such vertex, the cycle could enter the clique separator from one side but could not return from the other side without using an edge directly between the two sides, which does not exist.

Thus $C$ contains two distinct vertices $u,v\in K$. Since $K$ is a clique, the edge $uv$ belongs to $G$. We claim that $u$ and $v$ are nonconsecutive on the cycle $C$. Indeed, if they were consecutive, then removing the edge $uv$ from the cycle would leave a path connecting a vertex of $V(G_1)\setminus K$ to a vertex of $V(G_2)\setminus K$ without using any additional vertex of $K$,
which is impossible because there are no edges between the two sides outside $K$. Therefore $u$ and $v$ are nonconsecutive vertices of $C$.

But then the edge $uv$ is a chord of $C$, contradicting the assumption that $C$ is induced. Hence every induced cycle of length at least $4$ lies entirely in one side.
\end{proof}

\begin{corollary}
\label{locality}
Every induced cycle of the conflict graph $H$ of length at least $4$ is contained in $H_B$ for some block $B$ of the color-intersection graph $\Gamma$. In particular, $H$ is chordal if and only if every block subgraph $H_B$ is chordal.
\end{corollary}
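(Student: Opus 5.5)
The plan is to combine the iterated clique-sum structure of Lemma~\ref{clique-sum-decomp} with the two-sided statement of Lemma~\ref{holes-cliquesum}, by induction on the number of blocks of $\Gamma$. If $\Gamma$ has a single block, then $H_\Gamma = H$ and there is nothing to prove.

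Otherwise, as in the proof of Lemma~\ref{clique-sum-decomp}, I choose a leaf block $B$ of the block-cut tree with its unique articulation color $c$. I then set $H_1 = H[E(B)]$ and $H_2 = H[E_{\mathrm{rest}}]$. That proof shows three things: $H = H_1 \cup H_2$, $V(H_1)\cap V(H_2) = E_c$ is a clique, and by Lemma~\ref{articulation-cut} there are no edges of $H$ between $V(H_1)\setminus E_c$ and $V(H_2)\setminus E_c$. These are exactly the hypotheses of Lemma~\ref{holes-cliquesum}.

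Now let $C$ be an induced cycle of $H$ of length at least $4$. By Lemma~\ref{holes-cliquesum}, $C$ lies entirely in $H_1$ or entirely in $H_2$.
\begin{itemize}
\item If $C$ lies in $H_1$, we are done, since $H_1 = H_B$.
\item If $C$ lies in $H_2$, then $C$ is still an induced cycle of $H_2$, because $H_2$ is an induced subgraph of $H$.
\end{itemize}
In the second case we use that $H_2$ is the conflict graph of the instance restricted to the colors outside $B\setminus\{c\}$. Its color-intersection graph is $\Gamma - (V(B)\setminus\{c\})$, whose blocks are exactly the blocks of $\Gamma$ other than $B$. Each such block $B'$ gives the same edge set $E(B')$, so the block subgraphs $H_{B'}$ are unchanged. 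Applying the induction hypothesis to $H_2$ places $C$ inside some $H_{B'}$.

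The one point to check carefully is this identification of the blocks of the restricted color-intersection graph. The adjacency of two colors depends only on edges of those two colors sharing an endpoint, so removing other colors does not change adjacencies among the remaining ones. Deleting a leaf block minus its cut vertex leaves every other block intact and still maximal 2-connected.

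For the \emph{in particular} part, I argue both directions separately.
\begin{itemize}
\item If $H$ is chordal, then each $H_B$ is an induced subgraph of $H$, hence chordal, since chordality is hereditary.
\item Conversely, if $H$ is not chordal, it has an induced cycle of length at least $4$. By the first part this cycle lies in some $H_B$, and it stays induced there, so that $H_B$ is not chordal.
\end{itemize}
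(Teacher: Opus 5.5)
Your proof is correct and follows essentially the same route as the paper. The paper cites the iterated clique-sum decomposition of Lemma~\ref{clique-sum-decomp} as a black box and applies Lemma~\ref{holes-cliquesum} at each gluing step, whereas you unroll that leaf-block induction explicitly, including the check that the blocks of $\Gamma-(V(B)\setminus\{c\})$ are exactly the remaining blocks of $\Gamma$; the ``in particular'' part is identical, and one edge case shared with the paper is that for disconnected $\Gamma$ a leaf block may have no articulation color, but then the same argument works with $E_c=\emptyset$, since an induced cycle is connected and $H$ splits as a disjoint union.
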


\begin{proof}
By Lemma \ref{clique-sum-decomp}, the graph $H$ is an iterated clique-sum of the block subgraphs $H_B$, with gluing cliques given by articulation colors. We prove the first statement by induction on the number of clique-sum operations. At each step, Lemma \ref{holes-cliquesum} shows that an induced cycle of length at least $4$ cannot use
vertices from both sides of the clique-sum decomposition  therefore every induced cycle of length at least $4$ is already contained entirely in one term.  Iterating this argument down the clique-sum decomposition tree, we conclude that every induced cycle of length at least $4$ is contained in one of the block subgraphs $H_B$. For the second statement, if every $H_B$ is chordal, then no $H_B$ contains an induced cycle of length at least $4$, and hence by the first part neither does $H$. Thus $H$ is chordal. The converse is immediate, since each $H_B$ is an induced subgraph of $H$, and chordality is hereditary property.
\end{proof}

\begin{remark}
Lemma \ref{holes-cliquesum} and Corollary \ref{locality} are states for chordal obstructions but there is nothing inherent about such induced cycles. We can replace such obstructions by any appropriate obstruction of a well-chosen target family, see Subsection \ref{localCliqueSum} and Definition \ref{cliquesumlocal}.
\end{remark}

We now state an elementary observation that will be used repeatedly: we identify exactly what happens in $H$ and in $\Gamma$ after we delete a set of colors. 

\begin{observation}
\label{residual}
Let $(G,\C)$ be an RM instance, let $H$ be its conflict graph, and let $\Gamma$ be its color-intersection graph. For any set of colors $F\subseteq \C$, let $(G^F,\C^F)$ denote the residual instance obtained by deleting all edges of colors in $F$. Then: \emph{(i)} the conflict graph of $(G^F,\C^F)$ is exactly $H-S(F)$, and \emph{(ii)} the color-intersection graph of $(G^F,\C^F)$ is exactly $\Gamma-F$.
\end{observation}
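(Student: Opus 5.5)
The plan is to unfold the definitions directly, comparing vertex sets and adjacency relations on both sides. The residual instance $(G^F,\C^F)$ has edge set $E(G)\setminus S(F)$, where $S(F)$ is identified with the corresponding edges of $G$, and color set $\C\setminus F$. Each remaining edge keeps its original color.

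For (i), I first note that the vertex set of the conflict graph of $(G^F,\C^F)$ is $E(G)\setminus S(F)$, which equals $V(H)\setminus S(F)=V(H-S(F))$. Next, take two remaining edges $e,f$. They are adjacent in the residual conflict graph exactly when they share an endpoint in $G^F$ or have the same color in $\C^F$. Deleting edges does not change endpoints, so sharing an endpoint in $G^F$ is the same as sharing one in $G$. Colors are unchanged, so having the same color is also preserved. Hence adjacency coincides with adjacency in $H$ restricted to $V(H)\setminus S(F)$, and this restriction is the induced subgraph $H-S(F)$.

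For (ii), the vertices of the residual color-intersection graph are the colors in $\C\setminus F$. Strictly, these are the colors that still have at least one edge, and in the current setting every color in $\C\setminus F$ is nonempty, so this is $V(\Gamma)\setminus F$. Two colors $a,b\notin F$ are adjacent in the residual graph iff some edge of color $a$ and some edge of color $b$ share an endpoint. The edges of color $a$ and of color $b$ are exactly the same in $G^F$ as in $G$, because only edges whose colors lie in $F$ were removed. So adjacency agrees with adjacency in $\Gamma$, which gives $\Gamma-F$.

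The only subtle point is the convention in (ii) about vertices. If some color class were empty, or if one insisted that colors with no edges are dropped, this would affect both sides in the same way. Since removing $F$ never empties a color outside $F$, no discrepancy arises. Everything else is routine verification, so I expect no real obstacle.
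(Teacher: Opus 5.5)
Your proof is correct and is the intended argument. The paper states this as an elementary observation without proof, and your definition-unfolding supplies exactly the routine check it relies on: identical vertex sets, and adjacency preserved because deleting edges changes neither the endpoints nor the colors of the surviving edges.
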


Finally, an easy lemma that it will be used repeatidely later on:

\begin{lemma}
\label{inheritanceblocks}
Let $X$ be any graph, and let $X'$ be an induced subgraph of $X$. Then every block of $X'$ is contained in some block of $X$.
\end{lemma}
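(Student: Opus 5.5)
The plan is to use the standard characterization of blocks: a block of a graph is a maximal connected subgraph with no cut vertex of its own. Equivalently, a block is either a maximal 2-connected subgraph, a bridge edge, or an isolated vertex. We also use that any connected subgraph without a cut vertex lies inside a single block. Fix a block $B'$ of $X'$. Because $X'$ is an induced subgraph of $X$, $B'$ is also a subgraph of $X$. It is connected, and it has no cut vertex as a graph in its own right, since that property is intrinsic to $B'$ and does not depend on the ambient graph. So it suffices to show that such a subgraph of $X$ lies in some block of $X$.

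We handle the three shapes of $B'$ separately. If $B'$ is a single vertex $v$, then $v$ belongs to at least one block of $X$: either the isolated-vertex block $\{v\}$, or any block containing an edge at $v$. If $B'$ is a single edge $uv$, then $uv\in E(X)$, and every edge of $X$ lies in exactly one block of $X$. In the remaining case $B'$ is 2-connected. Here we invoke the standard fact that the blocks of $X$ partition $E(X)$ and that two edges lie in the same block if and only if they lie on a common cycle.

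In the 2-connected case, any two edges of $B'$ lie on a common cycle of $B'$, by Menger/Whitney. That cycle is also a cycle of $X$, so all edges of $B'$ fall into a single block $B$ of $X$. Every vertex of $B'$ is incident to some edge of $B'$, so $V(B')\subseteq V(B)$ as well, and hence $B'\subseteq B$.

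The only step needing care is justifying the "common cycle" characterization, or equivalently the maximality argument. Alternatively, we can argue directly: the union of two 2-connected subgraphs sharing at least two vertices is 2-connected. Hence a maximal 2-connected subgraph of $X$ containing $B'$ exists and is a block of $X$. We would cite this as standard rather than reprove it. Note that inducedness of $X'$ is not actually needed; the statement holds for arbitrary subgraphs.
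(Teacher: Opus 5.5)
Your argument is correct and essentially matches the paper's: the same split into isolated vertex, bridge edge and 2-connected block, with the key point that 2-connectivity of $B'$ is intrinsic, so $B'$ lies inside a block of $X$. The paper closes that case by extending $B'$ to a maximal 2-connected subgraph of $X$, which needs only finiteness, so your detour through Whitney's common-cycle characterization is more than required; you are also right that inducedness of $X'$ is never actually used.
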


\begin{proof}
Let $B'$ be a block of $X'$. If $B'$ is an isolated vertex or a bridge edge, then it is trivially contained in some block of $X$. Otherwise $B'$ is a maximal $2$-connected subgraph of $X'$. Since $X'$ is an induced subgraph of $X$,  any two internally vertex-disjoint paths in $X'$ between vertices of $B'$ are also such paths in $X$. Hence $B'$ is $2$-connected in $X$ as well. Therefore $B'$ is contained in some maximal $2$-connected subgraph of $X$, i.e. in some block of $X$.
\end{proof}

\subsection{Local clique-sum and hereditary classes beyond chordal targets}\label{localCliqueSum}
As mentioned briefly in the previous subsection, the chordal residual-locality argument is not specific to induced holes. It only uses heredity and the fact that the relevant minimal forbidden induced subgraphs cannot cross clique separators. In this subsection we formalize this mechanism.
We first define this clique-sum hereditary class more formally: 

\begin{definition}\label{cliquesumlocal}
Let $\X$ be a hereditary graph class. We say that $\X$ is \textbf{clique-sum local} if every graph that is minimal with respect to induced
subgraphs among graphs not belonging to $\mathcal{X}$ is connected and has no clique cutset.
\end{definition}

Notable examples of graph classes that fall into this category include Chordal graphs as it is demonstrated in the previous subsection specifically (the minimal forbidden induced subgraphs are induced cycles of length at least four), \textit{Perfect graphs} (where by the Strong Perfect Graph Theorem, minimal obstructions are odd holes and odd antiholes) and \textit{Bipartite graphs} (minimal forbidden induced subgraphs are odd cycles). In all these cases the minimal obstructions are connected and have no clique cutset. 

We note also that even if Perfect Graphs fall into this category, it does \textit{not} mean that all individual cases do as well, for example split and cographs do not. 

We now show that connected obstructions do not cross a clique-sum.
\begin{lemma}\label{nocrossobstructions}
Let $G_1$ and $G_2$ be graphs, let
\[
G = G_1 \cup G_2, \qquad K = V(G_1)\cap V(G_2),
\]
and assume that $K$ is a clique and that there are no edges between
$V(G_1)\setminus K$ and $V(G_2)\setminus K$. Then, if $J$ is an induced subgraph of $G$ that is connected and has no clique cutset, then we have that either $V(J)\subseteq V(G_1)$ holds or $V(J)\subseteq V(G_2)$ holds.
\end{lemma}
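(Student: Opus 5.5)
I would argue by contradiction, directly on the vertex set of $J$. Suppose $V(J)$ is contained in neither side. Then $J$ contains a vertex $a\in V(G_1)\setminus K$ and a vertex $b\in V(G_2)\setminus K$. This uses only that $G=G_1\cup G_2$, so every vertex of $J$ lies in $V(G_1)$ or in $V(G_2)$.

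Set $K_J = V(J)\cap K$, and take the partition
\[
A = V(J)\cap (V(G_1)\setminus K), \qquad K_J, \qquad B = V(J)\cap (V(G_2)\setminus K).
\]
By assumption both $A$ and $B$ are nonempty. Since $J$ is an induced subgraph of $G$, the hypothesis of no edges between the two sides gives no edges of $J$ between $A$ and $B$. Because $K$ is a clique of $G$ and $J$ is induced, $K_J$ is a clique of $J$ (possibly empty).

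There are two cases. If $K_J=\emptyset$, then $J$ is the disjoint union of $J[A]$ and $J[B]$ with no edges between them. This contradicts connectivity of $J$. If $K_J\neq\emptyset$, removing $K_J$ from $J$ leaves $J[A\cup B]$. Here $a$ and $b$ lie in different components, since every $a$--$b$ path in $J-K_J$ would need an edge between $A$ and $B$. So $J-K_J$ has at least two components, while $J$ itself has one, and therefore $K_J$ is a clique cutset of $J$. This contradicts the hypothesis that $J$ has no clique cutset.

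The only delicate point is matching the paper's definition of clique cutset: "a clique whose removal increases the number of connected components". The argument above fits this exactly, because connectivity of $J$ gives one component before removal and $a,b$ witness at least two after. The empty-clique case is where connectivity is genuinely needed, which is why the statement assumes both hypotheses. I would handle it separately rather than treat the empty set as a clique cutset. The plan uses nothing beyond the definitions, so I expect no real obstacle.
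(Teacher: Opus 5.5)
Your proof is correct and follows essentially the same route as the paper: assume $J$ meets both $V(G_1)\setminus K$ and $V(G_2)\setminus K$, use connectivity to force $V(J)\cap K\neq\emptyset$, and then note that this nonempty clique separates the two sides in $J$, contradicting the no-clique-cutset hypothesis. The one difference is that you derive the witnesses $a,b$ explicitly from the negated conclusion (using $V(G)=V(G_1)\cup V(G_2)$ and $K\subseteq V(G_1)\cap V(G_2)$), a small step the paper leaves implicit.
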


\begin{proof}
Suppose for the sake of contradiction that $J$ contains a vertex of $V(G_1)\setminus K$ and also a vertex of $V(G_2)\setminus K$. Since $J$ is connected and there are no edges between $V(G_1)\setminus K$ and $V(G_2)\setminus K$, the set $V(J)\cap K$ must be nonempty. Moreover, because $K$ is a clique in $G$, the set $V(J)\cap K$ is a clique in $J$.

We will show that $V(J)\cap K$ is a clique cutset of $J$. After deleting $V(J)\cap K$, no vertex of $V(J)\cap (V(G_1)\setminus K)$ is adjacent to a vertex of $V(J)\cap (V(G_2)\setminus K)$, simply because such edges do not exist in $G$. Thus $J - (V(J)\cap K)$ is disconnected, with vertices on both sides and so  $V(J)\cap K$ is a clique cutset of $J$, contradicting the assumption on $J$.
\end{proof}

\begin{theorem}\label{cliquesumlocal-theorem}
Let $\X$ be a clique-sum local hereditary graph class as defined above. Let $(G,\C)$ be an edge-colored graph, let $H$ be its conflict graph,
and let $\Gamma$ be its color-intersection graph. For each block $B$ of $\Gamma$, let $E(B) = \{e\in E(G): c(e)\in V(B)\}$, $H_B = H[E(B)]$. Then for every set of colors $F\subseteq \mathcal{C}$,
\[
H-S(F)\in \mathcal{X}
\quad\Longleftrightarrow\quad
H_B - S(F\cap V(B)) \in \mathcal{X}
\ ~\text{ for every block }~ B \in \mathcal{B}(\Gamma).
\]
\end{theorem}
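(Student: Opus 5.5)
The forward direction is immediate from heredity. For every block $B$, the set $E(B)\setminus S(F\cap V(B))$ consists exactly of the edges whose color lies in $V(B)\setminus F$. Hence $H_B - S(F\cap V(B)) = H[E(B)\setminus S(F)]$, which is an induced subgraph of $H-S(F)$. If $H-S(F)\in\X$, each of these block graphs is therefore in $\X$.

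For the reverse direction I argue by contraposition. Suppose $H-S(F)\notin\X$. Since $\X$ is hereditary and graphs are finite, $H-S(F)$ contains an induced subgraph $J$ that is minimal with respect to induced subgraphs among graphs not in $\X$. We may take $J$ as a vertex-minimal non-member. By clique-sum locality (Definition~\ref{cliquesumlocal}), $J$ is connected and has no clique cutset. It then suffices to show that $V(J)\subseteq E(B)$ for some block $B$. In that case $J$ is an induced subgraph of $H_B - S(F\cap V(B))$, so by heredity that block graph is not in $\X$, which completes the contrapositive.

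To place $J$ inside one block, I work in the original, undeleted $H$ and use its clique-sum structure (Lemma~\ref{clique-sum-decomp}). The graph $J$ is an induced subgraph of $H$ itself, it is connected, and it has no clique cutset as a graph; this property is intrinsic to $J$ and does not depend on where $J$ sits. I then induct along the leaf-block peeling used in the proof of Lemma~\ref{clique-sum-decomp}. At each step $H$ is written as $H_1\cup H_2$ with $H_1 = H_B$ for a leaf block $B$ and $V(H_1)\cap V(H_2)=E_c$ a clique, with no edges between the two sides outside $E_c$. Lemma~\ref{nocrossobstructions} then gives $V(J)\subseteq V(H_1)$ or $V(J)\subseteq V(H_2)$.
\begin{itemize}
\item In the first case we are done.
\item In the second case, $H_2$ is the conflict graph of the subinstance whose color-intersection graph is $\Gamma$ minus the non-articulation colors of $B$. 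Its blocks are the remaining blocks of $\Gamma$, by Lemma~\ref{inheritanceblocks} and the fact that removing a leaf block leaves the other blocks intact. The induction hypothesis then places $V(J)$ inside some $E(B')$.
\end{itemize}

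The main obstacle is the bookkeeping in this induction: checking that the block subgraphs of $H_2$ coincide with the $H_{B'}$ for the remaining blocks $B'$ of $\Gamma$, so that the induction closes, and handling degenerate cases such as $\Gamma$ disconnected. If $\Gamma$ is disconnected, the gluing clique can be taken empty, and Lemma~\ref{nocrossobstructions} still applies because $J$ is connected. An alternative route would be to apply the decomposition to the residual instance directly via Observation~\ref{residual}. However, the blocks of $\Gamma-F$ differ from those of $\Gamma$, and relating them would again need Lemma~\ref{inheritanceblocks}. Working in the undeleted $H$ avoids this, which is why I prefer it.
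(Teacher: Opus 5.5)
Your proof is correct. It shares the skeleton of the paper's argument: a minimal obstruction $J$ that is connected and clique-cutset-free, placed inside one block subgraph by Lemma~\ref{nocrossobstructions}. It differs in which decomposition you run this on.

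The paper applies Lemma~\ref{clique-sum-decomp} to the \emph{residual} instance via Observation~\ref{residual}. This places $J$ in the block subgraph of some block $B'$ of $\Gamma-F$. The paper then needs Lemma~\ref{inheritanceblocks} to lift $B'$ into a block $B$ of $\Gamma$.

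You instead use the fact that being connected without a clique cutset is intrinsic to $J$. So you run the no-crossing argument in the undeleted $H$ and land directly in some $E(B)$. Since $V(J)\cap S(F)=\emptyset$ and $E(B)\setminus S(F\cap V(B))=E(B)\setminus S(F)$, $J$ is then an induced subgraph of $H_B-S(F\cap V(B))$.

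What each route buys:
\begin{itemize}
\item Your route is shorter. It bypasses the residual decomposition and the block-inheritance step, apart from the trivial fact that peeling a leaf block leaves the other blocks intact.
\item The paper's route gives a finer localization, with $J$ inside a single block of $\Gamma-F$. That is more than the theorem needs.
\end{itemize}

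Keep your explicit handling of a disconnected $\Gamma$ via an empty gluing clique. The proof of Lemma~\ref{clique-sum-decomp} assumes a leaf block attached at an articulation color. The paper's application of it to $\Gamma-F$ tacitly needs the disconnected case, because deleting colors easily disconnects $\Gamma$.
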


\begin{proof}
The forward direction is immediate, simply because each graph $H_B-S(F\cap V(B))$ is an induced subgraph of $H-S(F)$ and $\X$ is hereditary.

For the converse, assume that
\[
H_B-S(F\cap V(B))\in\mathcal{X}
\quad\text{for each block }~ B\text{ of }\Gamma,
\]
but that $H-S(F) \notin \X$. Since $\X$ is a hereditary class there exists an induced subgraph $J$ of $H-S(F)$ that is minimal with respect to not belonging in $\X$. By the locality of the clique-sum, $J$ is connected and has no clique cutset. The residual instance obtained by deleting the colors in $F$ has conflict graph $H-S(F)$ and color-intersection graph $\Gamma-F$. 
Applying Lemma \ref{clique-sum-decomp} to this residual instance we see that the graph $H-S(F)$ is an iterated clique-sum of the block subgraphs corresponding to the blocks of $\Gamma-F$.
Repeated applications of the Lemma~\ref{nocrossobstructions} shows us that $J$ is contained entirely in the block subgraph associated with some block $B'$ of $\Gamma-F$. Now $B'$ is a block of the induced subgraph $\Gamma-F$. By Lemma \ref{inheritanceblocks} each block of an induced subgraph is contained in a block of the original graph, so $B'\subseteq B$ for some original block $B$ of $\Gamma$ from which we get that that $J \subseteq H_B - S(F\cap V(B))$. But notice that $J\notin \X$ and $\X$ is hereditary, contradicting the assumption that $H_B-S(F \cap V(B))\in\mathcal{X}$. Therefore $H-S(F)\in\mathcal{X}$.
\end{proof}

\subsection{Dynamic programming over the block-cut tree}\label{DP}

We now show that for any hereditary graph class $\X$ that is clique sum local (Definition \ref{cliquesumlocal}) the  deletion parameter $\kappa_\X$ becomes exactly computable when the block structure of the color-intersection graph is bounded. In the following we specialize once again, and only for clarity purposes, in the chordal case but all the arguments are identical in the more general case.

The main idea is that $\kappa_\X(G,\C)$ depends not on global graph $\Gamma$ alone but by \textit{how the non-chordality is distributed across the blocks of it}. The dynamic program depends on following three ingredients: 
\begin{itemize}
\item[\emph{(i)}] by component additivity property in Lemma \ref{kappa-components}, disconnected components of  $\Gamma$ can be solved independently, 
\item[\textit{(ii)}] Corollary \ref{locality} gives block-locality of holes in the remaining instance, 
\item[\textit{(iii)}] after deleting colors, Lemma \ref{residual-block-local-lemma} combines these ingredients and extends this to a residual block-locality statement: once the deletion status of articulation colors is fixed, chordality of the residual augmented graph can be tested blockwise on the original blocks of $\Gamma$. 
\end{itemize}

This is exactly a situation in which a dynamic program over the block-cut tree becomes possible.



\begin{remark}
Let $V_c, c \in \C$, denote the set of vertices in the original graph $G$ incident to some edge of color $c$, then $\Gamma$ is precisely the intersection graph of $\{V_c\}$ and thus bounded block size $b$ in $\Gamma$ means that the global color interaction can decompose into a tree of overlapping ``cores'', each involving $b$ colors at most.
\end{remark}

We start with the easy result that the parameter adds across connected components.
\begin{lemma}
\label{kappa-components}
Let $\Gamma_1,\dots,\Gamma_t$ be the connected components of the color-intersection graph $\Gamma$, and let $H_1,\dots,H_t$ be the
corresponding induced subgraphs of the conflict graph $H$. Then $H=H_1\sqcup \cdots \sqcup H_t$ is a disjoint union, and $\kappa_\X(G,\C)=\sum_{i=1}^t \kappa_i$,  where $\kappa_i$ denotes the chordal deletion number of the $i$th component instance.
\end{lemma}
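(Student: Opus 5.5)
The plan is to prove the two claims in order: first the disjoint-union structure of $H$, then additivity of $\kappa_\X$. Let $\C_i = V(\Gamma_i)$ be the colors of the $i$th component and $E_i=\{e\in E(G): c(e)\in \C_i\}$, so that $H_i=H[E_i]$. The sets $E_1,\dots,E_t$ partition $E(G)$, because every edge has exactly one color and the $\C_i$ partition $\C$.

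For the disjoint-union claim, take $e\in E_i$ and $f\in E_j$ with $i\neq j$; I need to show they are nonadjacent in $H$. This is the same argument as in Lemma~\ref{articulation-cut}. The colors $c(e)$ and $c(f)$ differ, so $e$ and $f$ are not in conflict by color. If $e$ and $f$ shared an endpoint in $G$, then $c(e)c(f)$ would be an edge of $\Gamma$ joining two different components, which is impossible. Hence $H=H_1\sqcup\cdots\sqcup H_t$. The $i$th component instance is $(G_i,\C_i)$ with edge set $E_i$. Its conflict graph is exactly $H_i$: adjacency in the conflict graph depends only on the two edges (shared endpoint or equal color), so it is inherited. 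Its color-intersection graph is $\Gamma_i$, which is the analogue of Observation~\ref{residual}.

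For additivity I will use that $\X$ is hereditary and closed under disjoint unions. For clique-sum local classes the latter holds by Definition~\ref{cliquesumlocal}: minimal obstructions are connected, and by heredity a disjoint union lies outside $\X$ only if it contains a minimal obstruction, which must then sit inside a single component.

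For $\le$, choose for each $i$ an optimal $F_i\subseteq\C_i$ with $H_i-S(F_i)\in\X$ and $|F_i|=\kappa_i$, and set $F=\bigcup_i F_i$. Then $H-S(F)=\bigsqcup_i (H_i-S(F_i))$, because $S(F)\cap E_i=S(F_i)$. This graph lies in $\X$ by closure under disjoint unions, so $\kappa_\X(G,\C)\le \sum_i \kappa_i$.

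For $\ge$, take an optimal $F$ for the whole instance and let $F_i=F\cap\C_i$. Each $H_i-S(F_i)$ is an induced subgraph of $H-S(F)\in\X$, so it lies in $\X$ by heredity. Hence $|F_i|\ge\kappa_i$, and summing the disjoint pieces gives $|F|\ge\sum_i\kappa_i$.

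The only step needing care is closure under disjoint unions. If the lemma is meant for an arbitrary hereditary $\X$, this property must be assumed explicitly, since the lemma can fail without it. In the chordal case it is immediate, because an induced cycle is connected and therefore lies within a single component.
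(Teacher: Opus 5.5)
Your proof is correct and follows the same route as the paper. Both arguments first show that no conflict edges run between different parts $E_i$. Both then note that $F$ works for the whole instance if and only if each restriction $F\cap V(\Gamma_i)$ works for its component, so the optimum is additive. The paper justifies the second step only for chordality, via the fact that a graph is chordal iff each of its components is. Your remark that a general hereditary $\X$ needs closure under disjoint unions is therefore a genuine and correct sharpening of the statement. Clique-sum local classes have this closure because their minimal obstructions are connected, whereas the lemma fails for, e.g., $2K_2$-free graphs.
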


\begin{proof}
If two colors lie in different connected components of $\Gamma$, then no edge of one color shares an endpoint with an edge of the other color. Therefore no conflict-graph edge can connect vertices
coming from different connected components of $\Gamma$, and $H$ is a disjoint union of the corresponding induced subgraphs. A graph is chordal if and only if each of its connected components is chordal. Hence a color set is  chordalizing if and only if its restriction to each connected component is chordalizing. The minimum size of such set is therefore the sum of the component-wise optima.
\end{proof}


\begin{lemma}
\label{residual-block-local-lemma}
Let $F\subseteq \C$ be any set of deleted colors. Then the residual augmented graph $H-S(F)$ is chordal
if and only if, for every original block $B$ of the color-intersection graph $\Gamma$, the local residual graph $H_B-S(F\cap V(B))$ is chordal.
\end{lemma}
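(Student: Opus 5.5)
The plan is to recognize Lemma~\ref{residual-block-local-lemma} as the chordal instance of Theorem~\ref{cliquesumlocal-theorem}, but to write out the argument directly so that the dynamic program can point to it. We prove the two implications separately and use only Observation~\ref{residual}, Lemma~\ref{clique-sum-decomp}, Corollary~\ref{locality} and Lemma~\ref{inheritanceblocks}.

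The forward direction follows from heredity. For any block $B$ of $\Gamma$ we have $E(B)\setminus S(F\cap V(B)) = E(B)\setminus S(F)$. Hence $H_B - S(F\cap V(B))$ is the subgraph of $H-S(F)$ induced by $E(B)\setminus S(F)$. Chordality is closed under taking induced subgraphs, so this local residual graph is chordal whenever $H-S(F)$ is.

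For the converse, suppose $H-S(F)$ is not chordal, and let $C$ be an induced cycle of length at least $4$ in it. We aim to show that $C$ lies inside some $H_B - S(F\cap V(B))$.
\begin{enumerate}
\item By Observation~\ref{residual}, $H-S(F)$ is the conflict graph of the residual instance $(G^F,\C^F)$, and $\Gamma - F$ is its color-intersection graph.
\item Apply Corollary~\ref{locality}, which rests on Lemma~\ref{clique-sum-decomp} and Lemma~\ref{holes-cliquesum}, to this residual instance. It gives a block $B'$ of $\Gamma-F$ such that every vertex of $C$ is an edge of $G$ whose color lies in $V(B')$.
\item Since $\Gamma - F$ is an induced subgraph of $\Gamma$, Lemma~\ref{inheritanceblocks} gives a block $B$ of $\Gamma$ with $B'\subseteq B$.
\item Every vertex of $C$ therefore has its color in $V(B)\setminus F$, so $V(C)\subseteq E(B)\setminus S(F\cap V(B))$.
\item The graph $H_B - S(F\cap V(B))$ is an induced subgraph of $H$, so $C$ is still an induced cycle of length at least $4$ there. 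This contradicts the assumption that this local residual graph is chordal.
\end{enumerate}

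The main point to check is step 2. Corollary~\ref{locality} is stated for an arbitrary instance, but its decomposition must be taken with respect to the blocks of $\Gamma-F$, not of $\Gamma$. Deleting colors can split a block of $\Gamma$ into several smaller blocks, or disconnect it. This causes no problem, because Lemma~\ref{clique-sum-decomp} applies verbatim to the residual instance. Colors of $F$ that are articulation vertices of $\Gamma$ simply drop out of the gluing cliques, and any new articulation colors of $\Gamma-F$ supply new clique cutsets.

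Step 3 is where we return to the original blocks of $\Gamma$. Lemma~\ref{inheritanceblocks} also covers the degenerate cases where $B'$ is an isolated vertex or a bridge of $\Gamma-F$. In those cases $C$ would use edges of at most two colors, and any two edges sharing an endpoint of $G$ are adjacent in $H$. Such a cycle can still be an induced hole, and it is handled by the same containment $B'\subseteq B$.
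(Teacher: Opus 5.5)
Your proposal is correct and follows the paper's proof essentially step for step: heredity of chordality for the forward direction, and for the converse, Observation~\ref{residual} together with the block-locality Corollary~\ref{locality} applied to the residual instance, whose color-intersection graph is $\Gamma-F$. Lemma~\ref{inheritanceblocks} then lifts the block $B'$ of $\Gamma-F$ to an original block $B$ of $\Gamma$, giving the contradiction. Your explicit identity $E(B)\setminus S(F\cap V(B))=E(B)\setminus S(F)$ in the forward direction, and your remark on the two-color degenerate case, are correct small refinements of the same argument.
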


\begin{proof}
The forward implication is straightforward  since each graph $H_B-S(F\cap V(B))$ is an induced subgraph of $H-S(F)$ and chordality is hereditary.

For the reverse direction, we assume that every graph $H_B-S(F\cap V(B))$ is chordal, but that $H-S(F)$ is not. Then $H-S(F)$ contains an induced cycle $C$ of length at least $4$. By Observation~\ref{residual}, the residual edge-colored instance obtained by deleting the colors in $F$ has conflict graph $H-S(F)$ and color-intersection graph $\Gamma-F$. Applying the earlier block-locality theorem to this residual instance, the induced cycle $C$ must be contained in the block subgraph associated with some block $B'$ of $\Gamma-F$. Now $B'$ is a block of an induced subgraph of $\Gamma$, so by Lemma \ref{inheritanceblocks}, $B'$ is contained in some original block $B$ of $\Gamma$. It follows that $C$ is contained in $H_B-S(F\cap V(B))$, contradicting the assumption that this graph is chordal. Hence $H-S(F)$ must be chordal.
\end{proof}

\begin{remark}
Lemma \ref{residual-block-local-lemma} immediately yields a useful global consequence. For each block $B \in \mathcal{B}(\Gamma)$, suppose that we choose a set of colors $F_B \subseteq V(B)$ such that the local residual augmented graph $H_B - S(F_B)$ is chordal. If we  define $F = \bigcup_{B \in \mathcal{B}(\Gamma)} F_B$, it then follows that $H - S(F)$ is chordal as well: for every block $B$ we have $F_B \subseteq F \cap V(B)$, so $H_B - S(F \cap V(B))$ is an induced subgraph of $H_B - S(F_B)$ and hence is chordal and by Lemma \ref{residual-block-local-lemma} we get that $H - S(F)$ is chordal and so any family of chordalizing color sets for the block subgraphs can be merged into a valid global chordalizing color set. Indeed if we define the local chordal deletion number of a block $B$ as  $\kappa_B = \min\{ |F_B| : F_B \subseteq V(B)\}$  and  $H_B - S(F_B)$ is chordal then $\kappa(G,\C) \leq \sum_{B \in \mathcal{B}(\Gamma)} \kappa_B$.

This upper bound is governed by the local conflict graphs $H_B$ instead by the topology of $\Gamma$ and this is essential because even a block of very few colors may contain a non-chordal local conflict graph. The dynamic program below we exploit exactly this form of locality combining block information once we have fixed the status of articulation colors.
\end{remark}

\begin{theorem}[Exact DP for bounded-block $\Gamma$]\label{DP-bounded-block-theorem}
Suppose that every block of the color-intersection graph $\Gamma$ has size at most $b$. Then the chordal deletion parameter $\kappa(G,\C)$ can be computed exactly in time $2^{\mathcal{O}(b)}\cdot \mathrm{poly}(|E|+|\C|)$.
\end{theorem}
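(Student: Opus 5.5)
We will run a dynamic program over the block-cut tree of $\Gamma$. The correctness rests entirely on Lemma~\ref{residual-block-local-lemma}: for every $F\subseteq\C$, $H-S(F)$ is chordal if and only if $H_B-S(F\cap V(B))$ is chordal for every block $B$. Chordality of the whole residual graph is therefore a conjunction of local predicates. Each predicate depends only on $F\cap V(B)$, a set of at most $b$ colors.

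First, by Lemma~\ref{kappa-components} we may treat each connected component of $\Gamma$ separately and add the answers, so assume $\Gamma$ is connected. We compute the blocks and the block-cut tree $T$ in linear time and root $T$ at some block $R$. For each block $B$ we then precompute the table
\[
\mathrm{ok}_B(D)=[\,H_B-S(D)\text{ is chordal}\,]\qquad\text{for all } D\subseteq V(B).
\]
There are at most $2^b$ sets $D$, and each chordality test takes time linear in $|E(B)|$ using a PEO/LexBFS recognition. Since every edge of $G$ lies in $E(B)$ for the blocks containing its color, and $\sum_B |V(B)| = O(|\C|)$ (each color lies in at most $\deg_T$ blocks, and these degrees sum to $O(|\C|)$), the total preprocessing time is $2^{b}\cdot\mathrm{poly}(|E|+|\C|)$.

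The DP state is a block $B$ together with the deletion status $\sigma\in\{0,1\}$ of its parent articulation color $p(B)$; the root has no parent, so only one state is used there. Define $\mathrm{val}(B,\sigma)$ as the minimum number of deleted colors lying in the subtree of $B$, excluding $p(B)$, such that every block in that subtree satisfies its $\mathrm{ok}$ predicate. For each child articulation color $c$ of $B$, let $g(c,\tau)=\sum_{B'\text{ child block of } c}\mathrm{val}(B',\tau)$. The recurrence then enumerates all $D\subseteq V(B)$ consistent with $\sigma$ on $p(B)$ and satisfying $\mathrm{ok}_B(D)$, and sets
\[
\mathrm{val}(B,\sigma)=\min_{D}\Big(|D\setminus\{p(B)\}|+\sum_{c\text{ child artic.\ of }B} g(c,[c\in D])\Big),
\]
with value $\infty$ if no such $D$ exists. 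The answer is $\mathrm{val}(R,\cdot)$. Each articulation color is counted exactly once, namely in the block where it is a child of that block, or at the root. The work is $2^b$ choices of $D$ times $O(b)$ summation per block, giving $2^{O(b)}\cdot\mathrm{poly}$ in total.

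The main obstacle is correctness, in both directions. For soundness, any global $F$ restricts to sets $D_B=F\cap V(B)$ that agree on shared articulation colors. By the forward direction of Lemma~\ref{residual-block-local-lemma}, each $D_B$ passes $\mathrm{ok}_B$, and the DP cost equals $|F|$ because blocks intersect only in articulation colors. For completeness, the DP's choices glue into a single $F$: the choices are consistent on articulation colors because the status $\sigma$ is passed down the tree, and non-articulation colors belong to exactly one block. The reverse direction of Lemma~\ref{residual-block-local-lemma} then certifies that $H-S(F)$ is chordal. The delicate point is checking that the tree structure, in which blocks pairwise intersect in at most one articulation color and the block-cut tree has no cycles, really makes fixing the status of articulation colors a sufficient interface between subtrees. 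Finally, for a general clique-sum local class $\X$ with polynomial recognition, Theorem~\ref{cliquesumlocal-theorem} replaces Lemma~\ref{residual-block-local-lemma} and the same argument goes through verbatim.
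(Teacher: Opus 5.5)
Your proposal is correct and is essentially the paper's proof. Both run a DP over the block-cut tree rooted at a block, with states given by the deletion status of the parent articulation color, at most $2^b$ local subsets enumerated per block, a local chordality test on $H_B - S(D)$, and correctness resting on Lemma~\ref{residual-block-local-lemma}. The differences are cosmetic: you fold the paper's articulation states $DP_a(\delta)$ into the sums $g(c,\tau)$, charge each articulation color in its parent block rather than through the $\delta$ term, and precompute the $\mathrm{ok}_B$ tables. Your soundness and completeness argument, which glues the locally consistent sets $D_B$ into one global $F$, is if anything a cleaner justification than the paper's inductive sketch.
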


\begin{proof}
By Lemma \ref{kappa-components} we can assume that $\Gamma$ is connected.

Let $T$ be the block-cut tree of $\Gamma$, rooted at an arbitrary block node $B_0$. The nodes of $T$
alternate between block nodes and articulation-color nodes. For a non-root block $B$, let $p(B)$ denote the unique parent articulation color connecting $B$ to the upper part of the tree. Let $A(B)$ to be the set of child articulation colors of $B$, and define
\[
I(B) = V(B)\setminus\bigl(A(B)\cup\{p(B)\}\bigr),
\]
i.e., the set of internal colors of $B$. For the root block $B_0$, we define $I(B_0) = V(B_0)\setminus A(B_0)$.  For each articulation color $a$, let $\mathcal B(a)$ denote the set of child blocks of $a$ in the rooted tree. The reason this state space is sufficient is precisely Lemma \ref{residual-block-local-lemma} which tells us that once the deletion status of the parent articulation color is fixed, the only remaining constraints are local to the current block and independent across distinct child subtrees. We now define the dynamic-programming states.

\noindent
\textbf{Articulation states.}
For an articulation color $a$ and  $\delta\in\{0,1\}$, let $DP_a(\delta)$ denote the minimum number of deleted colors in the subtree rooted at $a$, including the cost of $a$
itself if $\delta=1$, under the condition that $a$ is deleted when $\delta=1$ and kept when $\delta=0$.
 Once the status of $a$ is fixed, the subproblems on the child blocks are independent, so
\[
DP_a(\delta)=\delta+\sum_{B\in \mathcal B(a)} DP_B(\delta).
\]
Informally, \(DP_a(\delta)\) records the optimum cost in the subtree below an articulation color once
its deletion status is fixed, whereas \(DP_B(\delta)\) records the optimum cost in the sub-tree below a block under the boundary condition imposed by the parent articulation color.

\noindent
\textbf{Block states.}
For a non-root block $B$ and $\delta\in\{0,1\}$, we let $DP_B(\delta)$ denote the minimum number of deleted colors in the subtree rooted at $B$, excluding the possible cost
of the parent articulation color $p(B)$, under the condition that $p(B)$ has status $\delta$. To compute $DP_B(\delta)$, enumerate all subsets $X\subseteq A(B), Y\subseteq I(B)$.
Here $X$ specifies the child articulation colors deleted inside $B$, and $Y$ specifies the deleted internal colors. Define the set of deleted colors local to $B$ by
\[
D_{B,\delta,X,Y} =
Y\cup X\cup
\begin{cases}
\{p(B)\}, & \text{if } \delta=1,\\
\emptyset, & \text{if } \delta=0.
\end{cases}
\]
This local choice is feasible exactly when $H_B-S(D_{B,\delta,X,Y})$
is chordal. Therefore 
\[
DP_B(\delta)=
\min_{X\subseteq A(B),\,Y\subseteq I(B)}
\left\{
|Y|+\sum_{a\in A(B)} DP_a(\mathbf 1_{a\in X}):
H_B-S(D_{B,\delta,X,Y}) \text{ is chordal}
\right\}.
\]
For the root block $B_0$, the optimum value is
\[
\kappa(G,\C)=
\min_{X\subseteq A(B_0),\,Y\subseteq I(B_0)}
\left\{
|Y|+\sum_{a\in A(B_0)} DP_a(\mathbf 1_{a\in X}):
H_{B_0}-S(X\cup Y)\text{ is chordal}
\right\}.
\]

\medskip
\noindent
\textbf{Correctness:}
We prove by induction on the rooted block-cut tree that every state computes the claimed optimum.  For articulation states, once the status of $a$ is fixed, the subproblems on the child blocks interact only through that already fixed status. Hence they are independent, and the recurrence is exact:
\[
DP_a(\delta)=\delta+\sum_{B\in \mathcal B(a)} DP_B(\delta)
\]

For block states, let us fix a non-root block $B$ and a parent status $\delta$. Any globally feasible deletion set in
the subtree rooted at $B$ induces a unique choice of deleted child articulation colors $X\subseteq A(B)$, and  deleted internal colors $Y\subseteq I(B)$. The restriction of this deletion set to the colors of $B$ is exactly $D_{B,\delta,X,Y}$. By
Lemma~\ref{residual-block-local-lemma}, the residual augmented graph of the entire subtree rooted at $B$
is chordal if and only if \textit{(i)} the local residual graph $H_B - S(D_{B,\delta,X,Y})$  is chordal, and \textit{(ii)} for every child articulation color $a\in A(B)$, the residual augmented graph of the subtree rooted at $a$  is chordal under the induced deletion status of $a$. By the induction hypothesis, condition \textit{(ii)} is captured exactly by the terms $DP_a(\mathbf 1_{a\in X})$.
Thus every feasible global solution yields a feasible choice in the recurrence, and conversely every feasible choice in the recurrence, combined with optimal child solutions produces a globally feasible deletion set. Therefore the recurrence for $DP_B(\delta)$ is exact. Finally, the root formula is exact by the same reasoning.

\medskip
\noindent
\textbf{Cost:}
Every internal color of a block is charged exactly once, through a term $|Y|$ in that block state. Every articulation color is charged exactly once, namely through the $\delta$-term in its own articulation state. Hence no color is omitted and no color is double-counted.

\medskip
\noindent
\textbf{Running time:} Each block contains at most $b$ colors, so each block state enumerates at most $2^{|A(B)|+|I(B)|}\le 2^b$
local choices. For each such choice, the chordality test on $H_B-S(D_{B,\delta,X,Y})$ can be performed in polynomial time. The number of articulation and block states is linear in the size of
the block-cut tree, and therefore polynomial in $|\C|$. This yields total running time $2^{\mathcal{O}(b)}\cdot \mathrm{poly}(|E|+|\C|)$.
\end{proof}

\begin{corollary}
Let $\X$ be a clique-sum local hereditary graph class, and suppose that membership in $\X$ can be tested in polynomial time. If every block of the color-intersection graph $\Gamma$ has size at most $b$,
then the minimum number of colors whose deletion places the conflict graph in $\X$ can be computed exactly in time
\[
2^{O(b)}\cdot \mathrm{poly}(|E|+|\mathcal{C}|).
\]
\end{corollary}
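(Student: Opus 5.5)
The plan is to rerun the proof of Theorem~\ref{DP-bounded-block-theorem} with the chordality test replaced by a membership test for $\X$, and to check that each of its three ingredients was really a statement about clique-sum local hereditary classes rather than about chordal graphs.

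\emph{Component additivity.} First I will reprove Lemma~\ref{kappa-components} for general $\X$. The disjoint-union decomposition of $H$ along the connected components of $\Gamma$ is unchanged. What must change is the step ``a graph is chordal iff each component is chordal''. For a clique-sum local class this holds: a minimal non-member is connected, so it lies inside a single component. Heredity gives the converse. Hence a color set $F$ places $H-S(F)$ in $\X$ iff its restriction to each component does, and the optimum is the sum of the component optima.

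\emph{Residual block-locality.} Next I replace Lemma~\ref{residual-block-local-lemma} by Theorem~\ref{cliquesumlocal-theorem}. That theorem states exactly that, for every $F\subseteq\C$, $H-S(F)\in\X$ iff $H_B-S(F\cap V(B))\in\X$ for every original block $B$ of $\Gamma$. This is the only property of chordality the DP used. It lets feasibility of a global deletion set be certified block by block, once the status of each block's colors, including the shared articulation colors, is fixed.

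\emph{The DP itself.} I keep the rooted block-cut tree, the articulation states $DP_a(\delta)$, the block states $DP_B(\delta)$ and the enumeration over $X\subseteq A(B)$, $Y\subseteq I(B)$ verbatim, with the side condition ``$H_B-S(D_{B,\delta,X,Y})\in\X$''.

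The correctness induction goes through as before, using Theorem~\ref{cliquesumlocal-theorem}. The one point to check is that the subtree below $a$ is coupled to the rest only through the status of $a$. That holds because every block containing $a$ sees exactly the same bit $\delta$, and Theorem~\ref{cliquesumlocal-theorem} requires nothing beyond per-block membership. The cost accounting (each color charged once) is identical.

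\emph{Running time.} Each block state makes at most $2^b$ local choices. Each choice costs one polynomial-time membership test on an induced subgraph of $H$ of size at most $|E|$. There are linearly many states, so the total is $2^{O(b)}\cdot\mathrm{poly}(|E|+|\C|)$.

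I expect the main obstacle to be confirming that the general block-locality argument of Theorem~\ref{cliquesumlocal-theorem} is fully rigorous in the residual setting. There, blocks of $\Gamma-F$ may split original blocks and disconnect the residual color-intersection graph. I would handle this by applying Lemma~\ref{clique-sum-decomp} componentwise to $\Gamma-F$, using connectivity of minimal obstructions to place each obstruction in one component, and then invoking Lemma~\ref{inheritanceblocks}.
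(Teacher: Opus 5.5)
Your proposal is correct and follows the same route as the paper's sketch. Both reuse the block-cut-tree DP of Theorem~\ref{DP-bounded-block-theorem} verbatim, justify the per-block feasibility test by Theorem~\ref{cliquesumlocal-theorem}, and note that each block state still has at most $2^b$ local choices, each needing one polynomial-time membership test. Your re-verification of component additivity (Lemma~\ref{kappa-components} also invokes chordality, which the paper's sketch overlooks) and your componentwise handling of a disconnected $\Gamma-F$ are sound refinements of the same argument, not a different one.
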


\medskip
\noindent
\textit{Proof sketch:} The only place where chordality is
used in the proof of Theorem \ref{DP-bounded-block-theorem} is within the feasibility test to verify that $H_B-S(D_{B,\delta,X,Y})$  is chordal.
For a fixed clique-sum local hereditary class $\X$, Theorem \ref{cliquesumlocal-theorem} shows us that the  membership of the residual conflict graph in $\X$ is equivalent to testing the membership of all local residual graphs $H_B-S(D_{B,\delta,X,Y})$ in $\X$ thus we may replace the local chordality test by the local membership test for $\X$. But since by definition, we can test the membership in $\X$ in polynomial-time and the number of local choices per block is still at
most \(2^b\), the running-time analysis is unchanged. \qed


\medskip
A complementary algorithmic consequence of the same parameter does not rely on bounded block size at all: once a chordalizing color set is supplied we can easily solve the original optimization problem exactly by branching over the choices made inside the deleted color classes and solving the residual chordal instance. 

\begin{theorem}
\label{supplied-chordal-set}
Let $(G,\C)$ be an edge-colored graph, let $H$ be its conflict graph, and let $F\subseteq \C$ be a set of colors such that $H-S(F)$ is chordal. 
Then a maximum rainbow matching in $(G,\C)$ can be computed exactly in time
\[
\mathcal{O} \left(\prod_{\C_i\in F} (|\C_i|+1)\cdot (|V(H)|+|E(H)|)\right).
\]
In particular, if $|F|=k$ and every deleted color class has size at most $r$, then the running time is $ \mathcal{O} \left((r+1)^k\cdot (|V(H)|+|E(H)|)\right)$.
\end{theorem}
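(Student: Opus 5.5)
The plan is to enumerate, for every deleted color class $\C_i\in F$, which edge of $\C_i$ (if any) the optimal rainbow matching uses. A guess is a tuple $(g_i)_{\C_i\in F}$ with $g_i\in \C_i\cup\{\bot\}$, where $\bot$ means that no edge of color $\C_i$ is used. This gives exactly $\prod_{\C_i\in F}(|\C_i|+1)$ branches. For each branch let $M=\{g_i: g_i\neq\bot\}$. If $M$ is not a stable set of $H$, that is, if two guessed edges share an endpoint (distinct guessed edges have distinct colors, so this is the only possible conflict), we discard the branch. Otherwise, exactly as in the proof of Theorem~\ref{lasserre-rainbow-theorem}, we form the compatible residual set
\[
E'_M=\bigl\{e\in V(H)\setminus S(F): e \text{ is nonadjacent in } H \text{ to every } f\in M\bigr\},
\]
and solve maximum stable set on $J_M=H[E'_M]$. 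We output the best value of $|M|+\alpha(J_M)$ over all surviving branches.

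Correctness has two directions.
\begin{itemize}
\item \emph{Soundness.} Every candidate $M\cup I$, with $I$ stable in $J_M$, is a stable set of $H$: $M$ is stable by the check, $I$ is stable in an induced subgraph of $H$, and there are no edges between $M$ and $E'_M$ by definition of $E'_M$. Hence the candidate is a rainbow matching.
\item \emph{Completeness.} Conversely, take an optimal rainbow matching $\mathcal M$. Since it uses at most one edge per color, it determines a unique branch with $M=\mathcal M\cap S(F)$. Its remaining edges $\mathcal M\setminus S(F)$ avoid $S(F)$ and are nonadjacent to $M$, so they form a stable set of $J_M$. Thus the maximum over branches is attained.
\end{itemize}
The structural input is only that $J_M$ is an induced subgraph of $H-S(F)$ and hence chordal, because chordality is hereditary.

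For the running time, each branch must cost $\mathcal O(|V(H)|+|E(H)|)$. Building $E'_M$ and $J_M$ takes linear time by marking the neighbours of $M$ in $H$. The validity check on $M$ costs at most $\mathcal O(|E(H)|)$ by scanning adjacency lists. A maximum stable set of a chordal graph is then found in linear time: compute a perfect elimination ordering by LexBFS or maximum cardinality search, and run Gavril's greedy algorithm, which repeatedly takes the earliest remaining vertex in the PEO (a simplicial vertex) and deletes its closed neighbourhood. Multiplying by the number of branches gives the stated bound. The special case follows from $|\C_i|\le r$ for all $|F|=k$ classes.

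The only step that needs care is making the per-branch cost genuinely linear. One could recompute a PEO of each $J_M$ from scratch, which is still linear. More cleanly, one can compute a single PEO of $H-S(F)$ once and restrict it to $E'_M$, since the restriction of a PEO to an induced subgraph remains a PEO. Everything else is routine bookkeeping.
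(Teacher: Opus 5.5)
Your proposal is correct and follows the paper's proof essentially step for step. You branch over $\prod_{\C_i\in F}(\C_i\cup\{\bot\})$, discard conflicting guesses, and restrict $H-S(F)$ to the vertices nonadjacent to $M$; you then solve maximum stable set on this chordal induced subgraph in linear time. Your explicit soundness/completeness argument and your remark that one perfect elimination ordering of $H-S(F)$ can be restricted to each $E'_M$ fill in details the paper only asserts, but they add no new idea.
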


\begin{proof}
For each deleted color $\C_i\in F$, a feasible rainbow matching can use either no edge or exactly one edge from $\C_i$. We therefore branch over all choices
\[
\phi\in \prod_{\C_i\in F} (C_i\cup\{\bot\}),
\]
where $\phi(\C_i)=\bot$ means that no edge of color $\C_i$ is selected. The number of branches is exactly $\prod_{\C_i\in F} (|C_i|+1)$. Fix one branch $\phi$. First, we reject it if some pair of the chosen edges $\{\phi(c)\neq \bot\}$ are pairwise conflicting, since then no rainbow matching can realize this choice. Otherwise, let $M_\phi$ be the set of selected deleted-color edges, and restrict the residual instance to those vertices of $H-S(F)$ that are nonadjacent to every edge of $M_\phi$. Equivalently, we delete from $H-S(F)$ all vertices conflicting with the fixed choice $M_\phi$.

Because chordality is hereditary and $H-S(F)$ is chordal (by assumption), the remaining residual graph is chordal. Maximum stable set in a chordal graph can be computed in linear time. Hence, for the current branch, we can compute exactly the best compatible residual stable set $I_\phi$ in time $O(|V(H)|+|E(H)|)$, and the set $M_\phi\cup I_\phi$ is an optimal rainbow matching among all solutions realizing the deleted-color choices prescribed by $\phi$.

Taking the best solution over all branches yields an optimal rainbow matching for the original instance. The total running time is therefore
\[
\mathcal{O} \left(\prod_{\C_i\in F} (|\C_i|+1)\cdot (|V(H)|+|E(H)|)\right),
\]
and if $|F|=k$ and $|\C_i|\leq r$ for all $\C_i\in F$, this simplifies to $\mathcal{O} \left((r+1)^k\cdot (|V(H)|+|E(H)|)\right)$. 
\end{proof}

\begin{remark}
As before, the previous Theorem hold for any clique-sum local hereditary class $\X$ where the chordalizing set is replaced by any set $F$ such that $H-S(F) \in \X$, within exactly the same bounds, assuming that recognition to $\X$ can be done in polynomial time. 
\end{remark}

\begin{remark}
We note that, by using a standard backtracking argument, Theorem \ref{DP-bounded-block-theorem} can also give us the actual minimum  set $F^*$ such that $|F^*| = \kappa_\X(G,C)$ within the same time bound.
\end{remark}

\begin{corollary}
Combining the previous remark with Theorem \ref{supplied-chordal-set} gives us that Rainbow Matching is FPT when the target augmented residual graph $H$ belong to a clique-sum local class $\X$ with polynomial time recognition, the color-intersection graph $\Gamma$ consists of blocks of bounded size $b$ and the deleted color classes have size at most $r$. 
\end{corollary}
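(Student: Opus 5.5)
The corollary chains two results that are already proved, so the plan is a two-stage algorithm plus a running-time accounting. The first stage computes a minimum deletion set. I would use the preceding corollary, which generalizes Theorem~\ref{DP-bounded-block-theorem} to a clique-sum local class $\X$ with polynomial recognition, together with the backtracking remark. Together these give, in time $2^{\mathcal{O}(b)}\cdot \mathrm{poly}(|E|+|\C|)$, an actual set $F^*\subseteq\C$ with $|F^*|=\kappa_\X(G,\C)$ and $H-S(F^*)\in\X$.

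Backtracking works as follows. Each minimum in the block recurrence is attained by some pair $(X,Y)$. Storing these argmins and unwinding them from the root block $B_0$ down the block-cut tree recovers the deleted internal colors of every block and the deletion status of every articulation color. By the correctness part of that proof, via Lemma~\ref{residual-block-local-lemma} or, for general $\X$, Theorem~\ref{cliquesumlocal-theorem}, the union of these local choices is globally feasible.

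The second stage feeds $F^*$ into Theorem~\ref{supplied-chordal-set}, in the generalized form given by the remark following it. For each deleted class we branch over the at most $|\C_i|+1\le r+1$ choices (one edge or none) and discard incompatible branches. We then delete from $H-S(F^*)$ every vertex conflicting with the chosen edges $M_\phi$, and solve maximum weight stable set on what remains, which still lies in $\X$ because $\X$ is hereditary. There are at most $(r+1)^{|F^*|}$ branches, and correctness is exactly the argument of Theorem~\ref{supplied-chordal-set}.

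The main obstacle is a gap in that remark. Polynomial recognition of $\X$ does not by itself make maximum stable set solvable in polynomial time on graphs in $\X$. For chordal graphs it is linear, for perfect graphs it is polynomial via the theta function, and for bipartite graphs it follows from K\H{o}nig's theorem. I would therefore state the needed extra hypothesis explicitly: stable set is polynomial on $\X$. This holds for every example the paper lists, so I would add it rather than try to derive it.

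A second subtlety concerns the parameter. The bound $(r+1)^{|F^*|}$ depends on $|F^*|=\kappa_\X$, and bounded block size does not force $\kappa_\X$ to be small. So the FPT claim should be read with parameter $\kappa_\X+r$, or with $(b,r,\kappa_\X)$ jointly. The total running time is then
\[
2^{\mathcal{O}(b)}\cdot\mathrm{poly}(|E|+|\C|) \;+\; (r+1)^{\kappa_\X}\cdot\mathrm{poly}(|V(H)|+|E(H)|),
\]
which is FPT in that combined parameter. I would state this explicitly as the precise meaning of the corollary.
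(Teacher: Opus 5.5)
Your proposal is correct and is the paper's own one-line argument made explicit: compute a minimum set $F^*$ with $|F^*|=\kappa_{\X}$ by the block-cut-tree DP plus backtracking, then run the branching of Theorem~\ref{supplied-chordal-set} on $F^*$, for total time $2^{\mathcal{O}(b)}\cdot\mathrm{poly}(|E|+|\C|)+(r+1)^{\kappa_{\X}}\cdot\mathrm{poly}(|V(H)|+|E(H)|)$. Both of your caveats are justified, and the paper glosses over them: the branching step needs polynomial-time maximum stable set on $\X$, not mere recognition (for instance, $C_4$-free graphs form a clique-sum local class recognizable in polynomial time on which stable set is \textbf{NP}-hard), and the bound is FPT in $\kappa_{\X}$ together with $b$ and $r$, since bounded block size does not bound $\kappa_{\X}$.
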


\section{Complexity}
In this section we tackle the complexity of finding the parameter $\kappa$. We firstly show that if the target hereditary class $\X$ admits a finite forbidden induced subgraph characterization of \emph{bounded size}  then the deletion problem becomes FPT. Then, we show that computing the parameter is \textbf{NP}-hard.

\begin{theorem}\label{fpt}
Let $\X$ be some hereditary graph class characterized by a finite family $\mathcal{F}_{\X}$ of forbidden induced subgraphs. Set $d_{\X} = \max\{|V(F)| : F \in \mathcal{F}_{\X}\}$. Then the following problem is FPT by $k$: does there exist a set of colors $F \subseteq \C$ with $|F| \leq k$ such that $H-S(F)$ belongs to $\X$?
More precisely, the problem can be solved in time $\mathcal{O}(|V(H)|^{d_\X}) + \mathcal{O}^*(d_\X^k)$.
\end{theorem}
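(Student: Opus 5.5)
This is a Cai-style bounded search tree, adapted so that we branch on colors rather than on vertices. The invariant is a partial deletion set $F$ (initially $\emptyset$) together with a remaining budget $k-|F|$. At each node of the search tree we test whether $H-S(F)$ contains an induced copy of some $F'\in\mathcal{F}_{\X}$.

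If no such copy exists, then $H-S(F)\in\X$, because $\X$ is exactly the class of graphs avoiding $\mathcal{F}_{\X}$ as induced subgraphs, and we accept. If a copy exists but the budget is zero, we reject this branch. Otherwise we take the vertex set $W\subseteq V(H)\setminus S(F)$ of the copy, with $|W|\le d_\X$. We then branch on the colors of $W$, that is, on the set $c(W)=\{c(e): e\in W\}$. This set has at most $|W|\le d_\X$ elements, and in each branch we add one color of $c(W)$ to $F$.

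Correctness comes from one observation. Suppose $F^*\supseteq F$ is any solution with $H-S(F^*)\in\X$. Then $F^*$ must contain at least one color of $c(W)$. Indeed, if it contained none, then $W\cap S(F^*)=\emptyset$, so $H[W]$ would remain an induced subgraph of $H-S(F^*)$. That would contradict membership in $\X$, since $H[W]\cong F'\in\mathcal{F}_{\X}$ is forbidden and $\X$ is hereditary. Hence some branch extends toward $F^*$, and by induction on the budget the search accepts whenever a solution of size at most $k$ exists. Conversely, the search only accepts a set $F$ once $H-S(F)\in\X$ has been verified, so it never accepts wrongly. Every color added strictly reduces the budget. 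The search tree therefore has depth at most $k$ and branching factor at most $d_\X$, which gives at most $d_\X^k$ leaves.

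The remaining step is the running-time bookkeeping. I expect this to be the main obstacle, because the claimed bound is additive rather than the $\mathcal{O}(d_\X^k\cdot|V(H)|^{d_\X})$ that a naive search produces. Finding an obstruction costs $\mathcal{O}(|V(H)|^{d_\X})$ by brute force over all vertex subsets of size at most $d_\X$, which is polynomial for fixed $\X$.

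To obtain the additive form, I would enumerate all induced obstructions once at the start, in $\mathcal{O}(|V(H)|^{d_\X})$ time. For each obstruction I would record its color set, which has size at most $d_\X$. This turns the problem into a hitting-set instance: choose at most $k$ colors hitting every recorded color set. That is a $d_\X$-hitting set, and it can be solved by the same branching on the color set of an unhit obstruction in $\mathcal{O}^*(d_\X^k)$ time. The polynomial factors are absorbed into the $\mathcal{O}^*$ notation, since the number of recorded sets is at most $|V(H)|^{d_\X}$.

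This reformulation is valid because $H-S(F)\in\X$ holds if and only if $F$ hits the color set of every induced obstruction. Deleting color classes removes exactly the vertices of those colors, and an obstruction survives precisely when none of its colors is deleted. This yields the stated bound $\mathcal{O}(|V(H)|^{d_\X})+\mathcal{O}^*(d_\X^k)$.
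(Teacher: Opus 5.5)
Your proposal is correct and follows the same route as the paper. Like the paper, you enumerate all induced obstructions of size at most $d_\X$ once, take their color sets as hyperedges of a $d_\X$-hitting-set instance on $\C$, show that $F$ is a solution if and only if it hits every hyperedge, and solve that instance by branching on an unhit hyperedge. Your preliminary direct color-branching search tree is just an equivalent, more direct form of the same argument.
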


The simple proof follows the typical pattern for similar vertex deletion problems \cite{DBLP:journals/ipl/Cai96,DBLP:journals/jcss/Abu-Khzam10} and reduces the problem to a bounded-arity hitting set instance on the color set: every forbidden induced subgraph gives rise to the set of colors appearing on it, and any valid deletion set must hit all such color signatures.

\begin{proof}
Since $\X$ is fixed, $\mathcal{F}_\X$ and $d_\X$ are also fixed as well. We will reduce our problem to a bounded-arity hitting set instance on the color set $C$. We construct a hypergraph $\mathcal{H}_\X$ with universe $C$ as follows: For each subset
$U \subseteq V(H)$ with $|U| \leq d_\X$, we test (in constant time) whether the induced subgraph $H[U]$ is isomorphic to some graph in $\mathcal{F}_\X$. If true, we add the hyperedge $c(U) =\{c(v): v\in U\}\subseteq C$ to $\mathcal{H}_\X$. Each hyperedge of  $\mathcal{H}_\X$ has size at most $d_\X$ and the whole construction is obviously polynomial since there are at most $O(|V(H)|^{d_\X})$ candidates sets. We will see that a set of colors $F \subseteq C$ is a solution for
\textit{Grouped Deletion to $\X$} if-f it is a hitting set of $\mathcal{H}_\X$.


($\Rightarrow$) Suppose that $F$ is a solution, i.e., $H-S(F)\in \X$ and let $c(U)$ be an hyperedge of $\mathcal{H}_\X$. By construction $H[U]$ is isomorphic to some forbidden induced subgraph in $\mathcal{F}_\X$. If $F \cap c(U)=\emptyset$  then obviously no vertex of $U$ is deleted when removing the colors in $F$, so the same induced subgraph $H[U]$ survives inside $H-S(F)$. This contradicts $H-S(F)\in \X$. Hence every hyperedge of $\mathcal{H}_\X$ is hit by $F$.

($\Leftarrow$) Conversely suppose that $F \subseteq C$ hits every hyperedge of $\mathcal{H}_\X$, but that $H-S(F)\notin \X$. Since $\X$ is hereditary and characterized by $\mathcal{F}_\X$, the residual graph
$H-S(F)$ must contain an induced subgraph $J$ isomorphic to some graph in $\mathcal{F}_\X$. Let $U=V(J) \subseteq V(H)$. Then $|U| \leq d_\X$, and by construction $c(U)$ is a hyperedge of $\mathcal{H}_\X$. But every vertex of $U$ survives in $H-S(F)$, so $F \cap c(U)=\emptyset$, contradicting that $F$ hits every hyperedge.

Since every hyperedge has size at most $d_\X$, we can solve the resulting bounded-arity hitting set instance by the standard branching algorithm: if some hyperedge is not yet hit, we can branch on one of the at most $d_\X$ colors. The previous search tree has depth at most $k$  and hence running time $\mathcal{O}^*(d_\X^k)$.
\end{proof}


\begin{theorem}\label{nphardness}
Given an edge-colored graph $(G,\C)$ and an integer $K$, deciding whether $\kappa(G,\C) \leq K$ is \textbf{NP}-complete.
\end{theorem}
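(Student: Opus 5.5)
Membership in \textbf{NP} is immediate: a certificate is a color set $F$ with $|F|\le K$, and we can build $H-S(F)$ and test chordality in linear time (e.g.\ via a perfect elimination ordering from lexicographic BFS). For hardness I would reduce from \textsc{Vertex Cover}, which is \textbf{NP}-complete already on cubic (or bounded-degree) graphs. The plan is to make each vertex of the source graph a color class, and to make each edge of the source graph an induced hole in the conflict graph $H$ whose vertices carry exactly the two colors of its endpoints. Then deleting a color kills a hole if and only if the color is one of the two endpoint colors. A set of colors is chordalizing exactly when the corresponding vertices form a vertex cover, provided no other holes exist.

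Concretely, given $X=(V_X,E_X)$ and $K$, I would create one color $c_v$ for every $v\in V_X$. For every edge $uv\in E_X$ I would build a small gadget in $G$: a path on fresh vertices, colored so that its conflict graph contains an induced $C_4$ or longer hole. A natural first try is a path $P_5$ with edges $e_1e_2e_3e_4$ colored $c_u,c_v,c_u,c_v$. In the line graph the edges $e_1e_2e_3e_4$ form a path. Completing the color classes adds the chords $e_1e_3$ and $e_2e_4$, so this gadget is not a hole, and this is the point to adjust. Instead I would use a 4-cycle in $G$ with alternating colors $c_u,c_v,c_u,c_v$ (the bichromatic $C_4$ from the introduction). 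Its line graph is a $C_4$, and the color cliques add both diagonals, giving $K_4$, which again fails. So the gadget must place same-colored edges at distance two around a longer cycle. I would use a cycle of length $6$ with colors $c_u,c_u,\ast,c_v,c_v,\ast$ only if the extra colors are allowed. To avoid extra colors, I would take an induced cycle in $L(G)$ of length $6$ colored $c_u,c_v,c_v,c_u,\ldots$ and check chords. The cleanest option is: $e_1,e_2$ of color $c_u$ adjacent in the line graph, $e_3,e_4$ of color $c_v$ adjacent, with $e_2\!-\!e_3$ and $e_4\!-\!e_1$ as line-graph adjacencies. That is, $G$ contains a $4$-cycle whose edges are colored $c_u,c_u,c_v,c_v$ consecutively. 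In $H$ the color-clique edges $e_1e_2$ and $e_3e_4$ already lie on the cycle, and the diagonals $e_1e_3$, $e_2e_4$ are absent because those edges are vertex-disjoint and differently colored. This gives an induced $C_4$ in $H$ using exactly colors $\{c_u,c_v\}$.

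The main obstacle is controlling \emph{spurious} holes that arise because all gadgets for edges at $u$ share the color $c_u$, so $S(\{c_u\})$ is one big clique spanning many gadgets. I would argue that $H$ minus any set of colors containing a vertex cover is chordal. After deleting a vertex cover $F$, every gadget loses at least one of its two color pairs, so each surviving gadget contributes a clique (at most two edges of one color, which are also adjacent). The residual $H$ is then a disjoint union, over surviving colors $c_w$ with $w\notin F$, of the cliques $S(\{c_w\})$. Gadgets are vertex-disjoint in $G$, and since $V_X\setminus F$ is independent, no surviving gadget retains both colors, so there are no other adjacencies. This is a disjoint union of cliques and hence chordal. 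Conversely, if $F$ misses both endpoints of some edge $uv$, that gadget's induced $C_4$ survives. So $\kappa(G,\C)\le K$ iff $X$ has a vertex cover of size at most $K$. The construction is clearly polynomial, and this completes the reduction.
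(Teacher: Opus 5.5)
Your reduction is correct, and it is simpler than the paper's. The paper also reduces from \textsc{Vertex Cover} with 4-cycle gadgets, but it colors each gadget cyclically as $u,\alpha_{e,i},v,\beta_{e,i}$ with two fresh private colors. That makes opposite edges differently colored, so the four conflict-graph vertices induce a $C_4$. The price is that a gadget hole can also be destroyed by deleting a private color. The paper therefore needs $K+1$ copies of every gadget, so that a budget-$K$ solution cannot avoid deleting $u$ or $v$. Its forward direction also needs a simplicial-vertex argument for the surviving private vertices.

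Your gadget, with consecutive colors $c_u,c_u,c_v,c_v$, places each same-colored pair on edges that already share an endpoint, so completing the color classes adds no chords. Since the only colors are vertex colors, every chordalizing color set is literally a vertex subset of $X$. The converse direction therefore needs no counting, and deleting a vertex cover leaves directly a disjoint union of cliques.

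Your approach buys several things:
\begin{itemize}
\item The instance is smaller: $|\C|=|V_X|$ and there is one gadget per edge.
\item The color-intersection graph of your instance is exactly $X$.
\item Starting from cubic \textsc{Vertex Cover}, every color class has at most $6$ edges. This gives hardness even with bounded color classes, which complements the $(r+1)^k$ bound of Theorem~\ref{supplied-chordal-set}. In the paper's instance a public color has $(K+1)\deg(u)$ edges.
\end{itemize}

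The paper's gadget does give one thing yours does not: a proper edge coloring, in which every color class is a matching. In your gadget, $e_1$ and $e_2$ share both a color and an endpoint. If you want properness too, use two vertex-disjoint paths of length two, each colored $c_u,c_v$. These also induce a $C_4$ in $H$ on exactly the colors $\{c_u,c_v\}$, and the rest of your argument goes through unchanged.

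In a write-up, drop the abandoned gadget attempts and state the final gadget directly. Also discard isolated vertices of $X$ first, since they would produce empty color classes.
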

\begin{proof}
Membership in NP is immediate. 
For NP-hardness, we reduce from \textsc{Vertex Cover}. Let $(G_{\mathrm{vc}}=(U,E_{\mathrm{vc}}),K)$ be an instance of \textsc{Vertex Cover}. 
We construct an edge-colored graph $(G,\C)$ as follows. For each vertex $u\in U$, introduce a \emph{public color}, also denoted by $u$. For each edge $e=\{u,v\}\in E_{\mathrm{vc}}$ and each index $i\in [K+1]$ we introduce two new \emph{private colors} $\alpha_{e,i},\beta_{e,i}$, distinct from any other colors. For each pair $(e,i)$, create four new vertices $x_{e,i}^1, x_{e,i}^2, x_{e,i}^3, x_{e,i}^4$ and add the 4-cycle $x_{e,i}^1x_{e,i}^2x_{e,i}^3x_{e,i}^4x_{e,i}^1$ to $G$. We color its four edges, in cyclic order, by $u,\alpha_{e,i},v,\beta_{e,i}$, where $e=\{u,v\}$. Denote these four gadget edges by $g_{e,i}^1=x_{e,i}^1x_{e,i}^2$, $g_{e,i}^2=x_{e,i}^2x_{e,i}^3$, $g_{e,i}^3=x_{e,i}^3x_{e,i}^4$, $g_{e,i}^4=x_{e,i}^4x_{e,i}^1$. All gadgets are pairwise vertex-disjoint. The construction is clearly polynomial.

Let $H$ be the conflict graph of the constructed instance. Recall that two vertices of $H$ are adjacent if and only if the corresponding edges of $G$ either share an endpoint or have the same color. It is straightforward to see that for every $e=\{u,v\}\in E_{\mathrm{vc}}$ and every $i\in [K+1]$, the four vertices of $H$ corresponding to $g_{e,i}^1,g_{e,i}^2,g_{e,i}^3,g_{e,i}^4$
induce a chordless cycle $C_4$. Indeed, consecutive gadget edges share an endpoint in the original 4-cycle, hence the corresponding conflict-graph vertices are adjacent. Opposite gadget edges are disjoint and have different colors, hence the corresponding conflict-graph vertices are nonadjacent. Therefore these four vertices induce
exactly a $C_4$.

We now prove that $G_{\mathrm{vc}}$  has a vertex cover of size at most  $K$ if-f $\kappa(G,\C)\leq K$. 

\smallskip
\noindent
$(\Rightarrow)$
Assume that $X\subseteq U$ is a vertex cover of $G_{\mathrm{vc}}$ with $|X|\leq K$. Delete exactly the public colors in $X$, and set $F =X$. We claim that $H-S(F)$ is chordal. Indeed fix a gadget corresponding to some edge $e=\{u,v\}\in E_{\mathrm{vc}}$. Since $X$ is a vertex cover, at least one of $u,v$ belongs to $X$. Hence at least one of the two public-color vertices of this gadget is deleted from the conflict graph. If exactly one of $u,v$ is deleted, then among the four gadget vertices only three remain, namely one public-color vertex and the two private-color
vertices. The surviving induced subgraph is a 3-vertex star $K_{1,2}$: the
surviving public-color vertex is adjacent to each surviving private vertex,
while the two private vertices are nonadjacent. If both $u$ and $v$ are deleted, then only the two private vertices remain, and they are isolated. Thus every surviving private vertex has degree at most $1$ in $H-S(F)$. In particular, every surviving private vertex is simplicial.

Now remove all surviving private vertices first. Since all gadgets are pairwise
vertex-disjoint, there are no shared-endpoint conflicts between different
gadgets. Moreover, different public colors are distinct, so there are no
same-color conflicts between different public colors. Hence after all private
vertices are removed, what remains is exactly the graph induced by the surviving
public-color vertices, and this graph is a disjoint union of cliques, one clique
for each undeleted public color.

A disjoint union of cliques is chordal. Since each removed private vertex was
simplicial, adding them back preserves chordality. Therefore $H-S(F)$ is
chordal. It follows that $\kappa(G,\C)\le |F|=|X|\leq K$.

\smallskip
\noindent
$(\Leftarrow)$
Assume that $F\subseteq \mathcal C$ is a set of colors with $|F|\leq K$ such
that $H-S(F)$ is chordal. Let $X=F\cap U$ i.e., $X$ is the set of deleted public colors. We claim that $X$ is a vertex cover of $G_{\mathrm{vc}}$. Indeed suppose not. Then there exists an edge $e=\{u,v\}\in E_{\mathrm{vc}}$ such that neither $u$ nor $v$ belongs to $X$. Equivalently, neither public color $u$ nor public color $v$ is deleted.

Consider the $K+1$ gadgets $Q_{e,1},Q_{e,2},\dots,Q_{e,K+1}$ associated with the edge $e$. Fix one such gadget $Q_{e,i}$. Since neither public color $u$ nor public color $v$ is deleted, both corresponding public-color vertices survive in $H-S(F)$. If neither private color $\alpha_{e,i}$ nor $\beta_{e,i}$ is deleted either, then all four gadget vertices survive, and by the claim above they still induce a chordless $C_4$ in the residual augmented graph. This is impossible because $H-S(F)$ is chordal.

Therefore, for each $i\in [K+1]$, at least one of $\alpha_{e,i}$ or $\beta_{e,i}$
must belong to $F$. Since all these private colors are distinct across the
$K+1$ copies, it follows that $|F|\geq K+1$, contradicting that $|F|\leq K$.

This contradiction shows that for every edge $e=\{u,v\}\in E_{\mathrm{vc}}$,
at least one of $u,v$ lies in $X$. Hence $X$ is a vertex cover of $G_{\mathrm{vc}}$. Finally, $|X|\leq |F|\leq K$, so $G_{\mathrm{vc}}$ has a vertex cover of size at most $K$.
\end{proof}

\section{Conclusion and Open Directions}
Our upper bound $\kappa+1$ (or $\kappa+r$ in the more general non-perfect case) is already tight at the first nontrivial point: there are simple instances with $\kappa=1$ (like the rainbow $C_5$) for which one round of Lasserre is not exact but two rounds suffice. On the other hand, we do not necessarily expect $\kappa+1$ to characterize the exact rank in full generality. This raises the following natural open questions. 

First, on which graph families does the deletion parameter provide a sharp bound of the Lasserre rank? Second, what is the exact parameterized complexity of computing the  deletion number $\kappa$? On the positive side, if the target hereditary class $\X$ is characterized by finitely many forbidden induced subgraphs of bounded size, the grouped deletion problem is FPT. This yields a broad tractable family that includes residual cluster, cograph, threshold, and split graphs among others. Beyond this family, however, the complexity should depend to the particular class we are aiming for. Our residual block-locality theorem makes chordal residual graphs the most natural next target for further exploration. We note that although chordal deletion in the standard sense is FPT \cite{DBLP:journals/algorithmica/Marx10}, we do not expect an easy adaptation of these techniques to be applicable in our group deletion to chordal. 
On the hardness side, perfect residual graphs are the most natural next target since they play a central role in the polyhedral side of our story, but their obstruction family is unbounded, and the grouped nature of the deletion operation makes W-hardness plausible.  

Another open direction is to identify in which other settings our parameter might play a role. This naturally leads us to the framework of \emph{partition-augmented independent sets}: given a graph $G$ and a partition $\mathcal{P} =\{V_1,\dots,V_m\}$ of $V(G)$, we form the augmented graph $G_{\mathcal{P}}$ by completing each part into a clique and we ask for a maximum independent set of $G_{\mathcal{P}}$.  This framework captures natural packing problems beyond Rainbow Matching, such as group-constrained set packings and multiple-choice interval scheduling and it connects very naturally to the already-studied broad framework of matroid-constrained stable sets from \cite{DBLP:conf/swat/FominGK024}. It would be interesting to understand whether the grouped-deletion and exactness  identified here extend to that setting as well.

\bibliographystyle{plainurl}
\bibliography{references}

@inproceedings{karlin2011integrality,
  title={Integrality gaps of linear and semi-definite programming relaxations for knapsack},
  author={Karlin, Anna R and Mathieu, Claire and Nguyen, C Thach},
  booktitle={International Conference on Integer Programming and Combinatorial Optimization},
  pages={301--314},
  year={2011},
  organization={Springer}
}

@article{rothvoss2013lasserre,
  title={The Lasserre hierarchy in approximation algorithms},
  author={Rothvo{\ss}, Thomas},
  journal={Lecture Notes for the MAPSP},
  pages={1--25},
  year={2013}
}

@inproceedings{DBLP:conf/ipco/Lasserre01,
  author       = {Jean B. Lasserre},
  editor       = {Karen I. Aardal and
                  Bert Gerards},
  title        = {An Explicit Exact {SDP} Relaxation for Nonlinear 0-1 Programs},
  booktitle    = {Integer Programming and Combinatorial Optimization, 8th International
                  {IPCO} Conference, Utrecht, The Netherlands, June 13-15, 2001, Proceedings},
  series       = {Lecture Notes in Computer Science},
  pages        = {293--303},
  publisher    = {Springer},
  year         = {2001},
  url          = {https://doi.org/10.1007/3-540-45535-3\_23},
  doi          = {10.1007/3-540-45535-3\_23},
  bibsource    = {dblp computer science bibliography, https://dblp.org}
}

@article{DBLP:journals/siamjo/Lasserre01,
  author       = {Jean B. Lasserre},
  title        = {Global Optimization with Polynomials and the Problem of Moments},
  journal      = {{SIAM} J. Optim.},
  volume       = {11},
  number       = {3},
  pages        = {796--817},
  year         = {2001},
  url          = {https://doi.org/10.1137/S1052623400366802},
  doi          = {10.1137/S1052623400366802},
  bibsource    = {dblp computer science bibliography, https://dblp.org}
}

@article{DBLP:journals/ejc/AharoniBCHS19,
  author       = {Ron Aharoni and
                  Eli Berger and
                  Maria Chudnovsky and
                  David M. Howard and
                  Paul D. Seymour},
  title        = {Large rainbow matchings in general graphs},
  journal      = {Eur. J. Comb.},
  volume       = {79},
  pages        = {222--227},
  year         = {2019},
  url          = {https://doi.org/10.1016/j.ejc.2019.01.012},
  doi          = {10.1016/J.EJC.2019.01.012},
  bibsource    = {dblp computer science bibliography, https://dblp.org}
}

@article{DBLP:journals/jacm/ItaiRT78,
  author       = {Alon Itai and
                  Michael Rodeh and
                  Steven L. Tanimoto},
  title        = {Some Matching Problems for Bipartite Graphs},
  journal      = {J. {ACM}},
  volume       = {25},
  number       = {4},
  pages        = {517--525},
  year         = {1978},
  url          = {https://doi.org/10.1145/322092.322093},
  doi          = {10.1145/322092.322093},
  bibsource    = {dblp computer science bibliography, https://dblp.org}
}

@article{DBLP:journals/jct/Drisko98,
  author       = {Arthur A. Drisko},
  title        = {Transversals in Row-Latin Rectangles},
  journal      = {J. Comb. Theory {A}},
  volume       = {84},
  number       = {2},
  pages        = {181--195},
  year         = {1998},
  url          = {https://doi.org/10.1006/jcta.1998.2894},
  doi          = {10.1006/JCTA.1998.2894},
  bibsource    = {dblp computer science bibliography, https://dblp.org}
}

@inproceedings{DBLP:conf/swat/FominGK024,
  author       = {Fedor V. Fomin and
                  Petr A. Golovach and
                  Tuukka Korhonen and
                  Saket Saurabh},
  editor       = {Hans L. Bodlaender},
  title        = {Stability in Graphs with Matroid Constraints},
  booktitle    = {19th Scandinavian Symposium and Workshops on Algorithm Theory, {SWAT}
                  2024, Helsinki, Finland, June 12-14, 2024},
  series       = {LIPIcs},
  pages        = {22:1--22:16},
  publisher    = {Schloss Dagstuhl - Leibniz-Zentrum f{\"{u}}r Informatik},
  year         = {2024},
  url          = {https://doi.org/10.4230/LIPIcs.SWAT.2024.22},
  doi          = {10.4230/LIPICS.SWAT.2024.22},
  bibsource    = {dblp computer science bibliography, https://dblp.org}
}

@inproceedings{DBLP:conf/soda/BessyBTW23,
  author       = {St{\'{e}}phane Bessy and
                  Marin Bougeret and
                  Dimitrios M. Thilikos and
                  Sebastian Wiederrecht},
  editor       = {Nikhil Bansal and
                  Viswanath Nagarajan},
  title        = {Kernelization for Graph Packing Problems via Rainbow Matching},
  booktitle    = {Proceedings of the 2023 {ACM-SIAM} Symposium on Discrete Algorithms,
                  {SODA} 2023, Florence, Italy, January 22-25, 2023},
  pages        = {3654--3663},
  publisher    = {{SIAM}},
  year         = {2023},
  url          = {https://doi.org/10.1137/1.9781611977554.ch139},
  doi          = {10.1137/1.9781611977554.CH139},
  bibsource    = {dblp computer science bibliography, https://dblp.org}
}

@article{chvatal1975certain,
  title={On certain polytopes associated with graphs},
  author={Chv{\'a}tal, Va{\v{s}}ek},
  journal={Journal of Combinatorial Theory, Series B},
  volume={18},
  number={2},
  pages={138--154},
  year={1975},
  publisher={Academic Press}
}

@article{lovasz1991cones,
  title={Cones of matrices and set-functions and 0--1 optimization},
  author={Lov{\'a}sz, L{\'a}szl{\'o} and Schrijver, Alexander},
  journal={SIAM journal on optimization},
  volume={1},
  number={2},
  pages={166--190},
  year={1991},
  publisher={SIAM}
}

@article{laurent2003comparison,
  title={A comparison of the Sherali-Adams, Lov{\'a}sz-Schrijver, and Lasserre relaxations for 0--1 programming},
  author={Laurent, Monique},
  journal={Mathematics of Operations Research},
  volume={28},
  number={3},
  pages={470--496},
  year={2003},
  publisher={INFORMS}
}

@article{thetabody,
author = {Gouveia, Jo\~{a}o and Parrilo, Pablo A. and Thomas, Rekha R.},
title = {Theta Bodies for Polynomial Ideals},
journal = {SIAM Journal on Optimization},
volume = {20},
number = {4},
pages = {2097-2118},
year = {2010},
doi = {10.1137/090746525},

URL = { 
    
        https://doi.org/10.1137/090746525

},
eprint = { 
    
        https://doi.org/10.1137/090746525
}

}

@article{DBLP:journals/mp/BianchiENT17,
  author       = {Silvia M. Bianchi and
                  Mariana S. Escalante and
                  Graciela L. Nasini and
                  Levent Tun{\c{c}}el},
  title        = {Lov{\'{a}}sz-Schrijver SDP-operator, near-perfect graphs and
                  near-bipartite graphs},
  journal      = {Math. Program.},
  volume       = {162},
  number       = {1-2},
  pages        = {201--223},
  year         = {2017},
  url          = {https://doi.org/10.1007/s10107-016-1035-1},
  doi          = {10.1007/S10107-016-1035-1},
  bibsource    = {dblp computer science bibliography, https://dblp.org}
}

@article{DBLP:journals/disopt/BienstockO04,
  author       = {Daniel Bienstock and
                  Nuri {\"{O}}zbay},
  title        = {Tree-width and the Sherali-Adams operator},
  journal      = {Discret. Optim.},
  volume       = {1},
  number       = {1},
  pages        = {13--21},
  year         = {2004},
  url          = {https://doi.org/10.1016/j.disopt.2004.03.002},
  doi          = {10.1016/J.DISOPT.2004.03.002},
  bibsource    = {dblp computer science bibliography, https://dblp.org}
}

@techreport{wainwright2004treewidth,
  title={Treewidth-based conditions for exactness of the Sherali-Adams and Lasserre relaxations},
  author={Wainwright, Martin J and Jordan, Michael I},
  year={2004},
  institution={Technical Report 671, University of California, Berkeley}
}

@article{DBLP:journals/siamcomp/CarbonnelRZ22,
  author       = {Cl{\'{e}}ment Carbonnel and
                  Miguel Romero and
                  Stanislav Zivn{\'{y}}},
  title        = {The Complexity of General-Valued Constraint Satisfaction Problems
                  Seen from the Other Side},
  journal      = {{SIAM} J. Comput.},
  volume       = {51},
  number       = {1},
  pages        = {19--69},
  year         = {2022},
  url          = {https://doi.org/10.1137/19m1250121},
  doi          = {10.1137/19M1250121},
  bibsource    = {dblp computer science bibliography, https://dblp.org}
}

@article{pfender2014complexity,
 author       = {Van Bang Le and
                  Florian Pfender},
  title        = {Complexity results for rainbow matchings},
  journal      = {Theor. Comput. Sci.},
  volume       = {524},
  pages        = {27--33},
  year         = {2014},
  url          = {https://doi.org/10.1016/j.tcs.2013.12.013},
  doi          = {10.1016/J.TCS.2013.12.013},
  bibsource    = {dblp computer science bibliography, https://dblp.org}
}

@article{gupta2019parameterized,
  title={Parameterized algorithms and kernels for rainbow matching},
  author={Gupta, Sushmita and Roy, Sanjukta and Saurabh, Saket and Zehavi, Meirav},
  journal={Algorithmica},
  volume={81},
  number={4},
  pages={1684--1698},
  year={2019},
  publisher={Springer}
}

@article{DBLP:journals/algorithmica/GuptaRSZ20,
  author       = {Sushmita Gupta and
                  Sanjukta Roy and
                  Saket Saurabh and
                  Meirav Zehavi},
  title        = {Quadratic Vertex Kernel for Rainbow Matching},
  journal      = {Algorithmica},
  volume       = {82},
  number       = {4},
  pages        = {881--897},
  year         = {2020},
  url          = {https://doi.org/10.1007/s00453-019-00618-0},
  doi          = {10.1007/S00453-019-00618-0},
  bibsource    = {dblp computer science bibliography, https://dblp.org}
}

@article{hperfect,
  author       = {Najiba Sbihi and
                  J. P. Uhry},
  title        = {A class of h-perfect graphs},
  journal      = {Discret. Math.},
  volume       = {51},
  number       = {2},
  pages        = {191--205},
  year         = {1984},
  url          = {https://doi.org/10.1016/0012-365X(84)90071-2},
  doi          = {10.1016/0012-365X(84)90071-2},
  bibsource    = {dblp computer science bibliography, https://dblp.org}
}

@inproceedings{DBLP:conf/mfcs/Stamoulis14,
  author       = {Georgios Stamoulis},
  editor       = {Erzs{\'{e}}bet Csuhaj{-}Varj{\'{u}} and
                  Martin Dietzfelbinger and
                  Zolt{\'{a}}n {\'{E}}sik},
  title        = {Approximation Algorithms for Bounded Color Matchings via Convex Decompositions},
  booktitle    = {Mathematical Foundations of Computer Science 2014 - 39th International
                  Symposium, {MFCS} 2014, Budapest, Hungary, August 25-29, 2014. Proceedings,
                  Part {II}},
  series       = {Lecture Notes in Computer Science},
  pages        = {625--636},
  publisher    = {Springer},
  year         = {2014},
  url          = {https://doi.org/10.1007/978-3-662-44465-8\_53},
  doi          = {10.1007/978-3-662-44465-8\_53},
  bibsource    = {dblp computer science bibliography, https://dblp.org}
}

@article{DBLP:journals/tcs/MastrolilliS14,
  author       = {Monaldo Mastrolilli and
                  Georgios Stamoulis},
  title        = {Bi-criteria and approximation algorithms for restricted matchings},
  journal      = {Theor. Comput. Sci.},
  volume       = {540},
  pages        = {115--132},
  year         = {2014},
  url          = {https://doi.org/10.1016/j.tcs.2013.11.027},
  doi          = {10.1016/J.TCS.2013.11.027},
  bibsource    = {dblp computer science bibliography, https://dblp.org}
}

@article{DBLP:journals/disopt/KelkS19,
  author       = {Steven Kelk and
                  Georgios Stamoulis},
  title        = {Integrality gaps for colorful matchings},
  journal      = {Discret. Optim.},
  volume       = {32},
  pages        = {73--92},
  year         = {2019},
  url          = {https://doi.org/10.1016/j.disopt.2018.12.003},
  doi          = {10.1016/J.DISOPT.2018.12.003},
  bibsource    = {dblp computer science bibliography, https://dblp.org}
}

@article{DBLP:journals/ipl/Cai96,
  author       = {Leizhen Cai},
  title        = {Fixed-Parameter Tractability of Graph Modification Problems for Hereditary
                  Properties},
  journal      = {Inf. Process. Lett.},
  volume       = {58},
  number       = {4},
  pages        = {171--176},
  year         = {1996},
  url          = {https://doi.org/10.1016/0020-0190(96)00050-6},
  doi          = {10.1016/0020-0190(96)00050-6},
  bibsource    = {dblp computer science bibliography, https://dblp.org}
}

@article{DBLP:journals/jcss/Abu-Khzam10,
  author       = {Faisal N. Abu{-}Khzam},
  title        = {A kernelization algorithm for d-Hitting Set},
  journal      = {J. Comput. Syst. Sci.},
  volume       = {76},
  number       = {7},
  pages        = {524--531},
  year         = {2010},
  url          = {https://doi.org/10.1016/j.jcss.2009.09.002},
  doi          = {10.1016/J.JCSS.2009.09.002},
  bibsource    = {dblp computer science bibliography, https://dblp.org}
}

@article{DBLP:journals/tcs/HeggernesHJKV13,
  author       = {Pinar Heggernes and
                  Pim van 't Hof and
                  Bart M. P. Jansen and
                  Stefan Kratsch and
                  Yngve Villanger},
  title        = {Parameterized complexity of vertex deletion into perfect graph classes},
  journal      = {Theor. Comput. Sci.},
  volume       = {511},
  pages        = {172--180},
  year         = {2013},
  url          = {https://doi.org/10.1016/j.tcs.2012.03.013},
  doi          = {10.1016/J.TCS.2012.03.013},
  bibsource    = {dblp computer science bibliography, https://dblp.org}
}

@article{DBLP:journals/algorithmica/Marx10,
  author       = {D{\'{a}}niel Marx},
  title        = {Chordal Deletion is Fixed-Parameter Tractable},
  journal      = {Algorithmica},
  volume       = {57},
  number       = {4},
  pages        = {747--768},
  year         = {2010},
  url          = {https://doi.org/10.1007/s00453-008-9233-8},
  doi          = {10.1007/S00453-008-9233-8},
  bibsource    = {dblp computer science bibliography, https://dblp.org}
}

@article{DBLP:journals/mst/MorawietzGKS22,
  author       = {Nils Morawietz and
                  Niels Gr{\"{u}}ttemeier and
                  Christian Komusiewicz and
                  Frank Sommer},
  title        = {Refined Parameterizations for Computing Colored Cuts in Edge-Colored
                  Graphs},
  journal      = {Theory Comput. Syst.},
  volume       = {66},
  number       = {5},
  pages        = {1019--1045},
  year         = {2022},
  url          = {https://doi.org/10.1007/s00224-022-10101-z},
  doi          = {10.1007/S00224-022-10101-Z},
  bibsource    = {dblp computer science bibliography, https://dblp.org}
}

@inproceedings{DBLP:conf/mfcs/LafondL23,
  author       = {Manuel Lafond and
                  Weidong Luo},
  editor       = {J{\'{e}}r{\^{o}}me Leroux and
                  Sylvain Lombardy and
                  David Peleg},
  title        = {Parameterized Complexity of Domination Problems Using Restricted Modular
                  Partitions},
  booktitle    = {48th International Symposium on Mathematical Foundations of Computer
                  Science, {MFCS} 2023, Bordeaux, France, August 28 - September 1, 2023},
  series       = {LIPIcs},
  pages        = {61:1--61:14},
  publisher    = {Schloss Dagstuhl - Leibniz-Zentrum f{\"{u}}r Informatik},
  year         = {2023},
  url          = {https://doi.org/10.4230/LIPIcs.MFCS.2023.61},
  doi          = {10.4230/LIPICS.MFCS.2023.61},
  bibsource    = {dblp computer science bibliography, https://dblp.org}
}

\end{document}